\documentclass[10.5pt, letter]{article}

\usepackage[margin=0.8in]{geometry}
\usepackage{libertine}
\usepackage{libertinust1math}
\usepackage[T1]{fontenc}
\usepackage[norelsize,ruled,vlined,linesnumbered]{algorithm2e}
\usepackage{algorithmic}
\usepackage{amsthm}
\usepackage{amsmath,amssymb,amsfonts}
\usepackage{graphicx}
\usepackage{subfigure}
\usepackage{comment}
\usepackage{hyperref}
\usepackage{xcolor}
\usepackage[numbers]{natbib}
\usepackage{booktabs}
\usepackage{cleveref}

\newtheoremstyle{mystyle}
    {2.0mm}
    {2.0mm}
    {\it}
    {0.0mm}
    {\sc}
    {.}
    { }
    {}
\theoremstyle{mystyle}
\newtheorem{definition}{Definition}
\newtheorem{lemma}{Lemma}
\newtheorem{theorem}{Theorem}

\newtheorem{proposition}{Proposition}

\newcommand{\vs}{\vspace{1.5mm}}
\SetKwComment{Comment}{$\triangleright$\ }{}

\SetKwComment{Comment}{$\triangleright$\ }{}

\title{Grant: A Framework for Approximate Nearest Neighbor Search with Multi-Attribute Range Filters}
\author{
    Daichi Amagata\\[2mm]
    The University of Osaka \\
    \textit{amagata.daichi@ist.osaka-u.ac.jp}
}
\date{}

\begin{document}

\maketitle

\begin{abstract}
Recently, academia and industry have considered the problem of approximate nearest neighbor search (ANNS) with filters.
In this setting, each object consists of a high-dimensional vector and attribute values.
Given a query vector, filters for attribute values, and $k$, this problem retrieves $k$ vectors approximately nearest to the query vector among a set of objects passing the filters.
This paper considers range filters, i.e., users can specify a range constraint for each attribute.
Most existing works do not consider this setting, and they assume (i) only a single attribute or (ii) matching filters that require the same attribute values or categories.
Existing techniques for these assumptions are not available for our setting or are trivially not efficient.
Although standard solutions, such as pre-filter and post-filter, can handle our problem, they are also inefficient.
Some works tackle the same problem as ours, but their techniques necessitate historical query workloads, which significantly limit practical use cases.
To remove these limitations, this work proposes Grant, a novel framework that solves this problem efficiently while accepting arbitrary range filters and ANNS data structures.
Grant can guarantee a search time sub-linear to the number of objects, which is not held by existing techniques.
We conduct extensive experiments, and their results demonstrate that Grant outperforms existing techniques.
\end{abstract}

\section{Introduction}  \label{sec:introduction}
Modern search engines increasingly rely on machine learning models that embed many types of objects, such as images and texts, into high-dimensional vectors.
Due to this fact, the problem of similarity search in a high-dimensional space becomes more important in many fields, such as recommender systems \cite{amagata2021reverse,amagata2023reverse}, information retrieval \cite{amagata2025mine,amagata2025approximate,aoyama2023simpler,hirata2023categorical,hirata2022solving}, databases \cite{nakama2021approximate,uemura2025ivf++,yamashita2025should,uemura2024efficient}, artificial intelligence \cite{kido2025fast}, and machine learning \cite{amagata2021fast,amagata2022fast}.
For example, computing the exact nearest neighbor on large data with high dimensionality is computationally expensive, and many applications, such as RAG (Retrieval-Augmented Generation), allow approximate results \cite{li2019approximate}.
Therefore, approximate nearest neighbor search (ANNS) has been studied extensively \cite{andoni2017nearest}.

Nowadays, to enable making use of attribute information associated with objects, ANNS with attribute filters has been considered both in academia \cite{wu2022hqann,gupta2023caps,wang2023efficient,zuo2024serf,xu2024irangegraph,patel2024acorn} and industry (e.g., Alibaba \cite{wei2020analyticdb}, Apple \cite{mohoney2023high}, and Microsoft \cite{zhang2023vbase,gollapudi2023filtered}).
As specified in these industrial works, the problem of ANNS with attribute filters is also important in many real-world Web applications.
A typical example of this problem is that, in e-commerce scenarios, each user finds the products most similar to a given query vector among a set of objects satisfying attribute filters (e.g., price, rating, and production date).
Similarly, RAG applications receive benefits from ANNS with attribute filters.
For example, when a user wants to know health risk, he/she may write a prompt like ``investigate what symptoms tend to appear for people aged from 40 to 50 and with blood pressure at least 140.''
This script is transformed into an ANNS with range filters on age and blood pressure attributes, and the large language model obtains the answer from the retriever.

\subsection{Limitations of Prior Work}
The existing works on the problem of ANNS with attribute filters have the following issues.

\vs
\noindent
(1) \textit{Single attribute assumption.}
Many existing works \cite{wu2022hqann,zuo2024serf,xu2024irangegraph,patel2024acorn,liang2025unify,xie2025beyond,wang2025wow,ye2025compass} assume that each object has only a single attribute, and their techniques are available only in this case.
(Although some works \cite{xu2024irangegraph,zuo2024serf,engels2024approximate,liang2025unify,ye2025compass} consider range filters, they cannot deal with multiple attributes, or they are reduced to the standard technique, Post-filter, which is introduced in \Cref{sec:preliminary}.)
This setting significantly limits applications, and objects generally have multiple attributes, as seen in the above examples.

\vs
\noindent
(2) \textit{Dealing with only matching filters.}
Most existing works consider attribute ``matching'' as a filter \cite{jiang2025digra,wu2022hqann,gupta2023caps,gollapudi2023filtered,wang2023efficient,patel2024acorn,cai2024navigating}, i.e., a filter evaluates whether a given object has the same attribute value/category as that of this filter.
As the above application examples show, users typically specify ``range'' filters, e.g., they want items whose prices and production dates fall within certain ranges.
Obviously, existing techniques for matching filters are inefficient for range filters on multiple attributes, as they require running for each filter in multiple continuous ranges, which necessitates an infinite number of iterations. 

\vs
\noindent
(3) \textit{Always requiring past queries.}
Although some works \cite{mohoney2023high,li2025sieve} consider range filters on multiple attributes, they necessitate past queries.
They build ANNS indices based on past queries, so they do not function without past queries. 
This strong assumption significantly limits practical use cases.
A better and practical property is that an ANNS algorithm works well without past query workloads and its efficiency can be accelerated with past queries.
Note that no work devised optimal algorithms for ANNS with range filters on multiple attributes, without past query workloads.

\vs
\noindent
(4) \textit{Vector data management systems are general but inefficient.}
Some vector data management systems, e.g., \cite{wei2020analyticdb,wang2021milvus,sehgal2025navix}, support multiple attributes and range filters.
They, however, do not optimize ANNS algorithms for this setting, and their approaches are essentially Pre-filter or Post-filter.

\subsection{Contribution}
This paper tackles the problem of ANNS with multi-attribute range filters, a more general and challenging problem than those with a single attribute and matching filters.
An efficient and generally available solution (i) must be a workload-free structure and (ii) has an \textit{option} of past query usage for efficiency acceleration.
No existing works satisfy efficiency and generality at the same time for this problem.

This paper proposes \textit{Grant} (\underline{Gra}ph i\underline{n} \underline{t}ree index), an efficient framework for this problem, and it incorporates graph indices into a tree index.
It has three main advantages:
(i) It handles multiple attributes and arbitrary ranges, optimized for practical scenarios.
(ii) It can employ arbitrary state-of-the-art multi-dimensional tree index and proximity graph.
(iii) It functions well without any past query workloads, and its efficiency can be enhanced with past query vectors.
(iv) It is fast and accurate in practice.
To summarize, our main contributions are as follows.
\begin{itemize}
    \setlength{\leftskip}{-4.0mm}
    \item   We propose Grant, a new and general framework for the problem of ANNS with multi-attribute range filters, and it is built without any past query workloads.
    \item   We present an efficient range-filtered ANNS algorithm that exploits the Grant structure.
            If past query vectors are available, the efficiency of the proposed algorithm can be accelerated.
    \item   When we use a kd-tree \cite{bentley1975multidimensional} as a base structure of Grant, Grant can have a theoretical guarantee that the ANNS time is sub-linear to $n$, where $n$ is the number of objects.
            This theoretical result is not achieved by existing algorithms for ANNS with range filters on multiple attributes.
    \item   We conduct extensive experiments, and their results demonstrate that Grant outperforms state-of-the-art baselines in most tests.
\end{itemize}

\section{Preliminary}   \label{sec:preliminary}

\subsection{Problem Definition}
Let $X$ be a set of $n$ objects.
Each object $x_i \in X$ has a $d$-dimensional vector $\mathbf{v}_i$ and $m$ attribute values $\mathbf{a}_i$, i.e., $x_i = (\mathbf{v}_i, \mathbf{a}_i)$ where $d$ is large and $m \geq 2$.
Notice that the $m$ attribute values can be represented as an $m$-dimensional vector, so we use $a_i[j]$ to denote the $j$-th attribute value of $\mathbf{a}_i$.
Let $dist(\cdot,\cdot)$ be the distance between two vectors.
The problem of $k$ nearest neighbor search with multi-attribute range filters is defined as follows:

\begin{definition}[\textsc{$k$ nearest neighbor search with multi-attribute range filters}]
\textit{Given an object set $X$, a query vector $\mathbf{q}$, query range filters $F = \{[l_1, u_1], ..., [l_m, u_m]\}$, and a result size $k$, this problem retrieves a set $S^*(\mathbf{q},F)$ of $k$ objects such that (i) $S^*(\mathbf{q},F) \subseteq X(F)$ and (ii) $\forall x \in S^{*}(\mathbf{q},F)$ and $\forall x' \in X(F)\backslash S^{*}(\mathbf{q},F)$, $dist(\mathbf{v},\mathbf{q}) \leq dist(\mathbf{v'},\mathbf{q})$, where}
\begin{equation*}
    X(F) = \{x \,|\, x \in X, \forall j \in [1,m], l_j \leq a[j] \leq u_j\}.
\end{equation*}
\textit{Ties are broken arbitrarily.
If $|X(F)| < k$, we return $X(F)$.}
\end{definition}

\noindent
This paper considers an approximation version of the above problem.
We search for $S(\mathbf{q},F)$ such that $S(\mathbf{q},F) \subseteq X(F)$, $|S(\mathbf{q},F)| = k$, but we may have $S(\mathbf{q},F) \neq S^{*}(\mathbf{q},F)$.
To measure the accuracy of $S$, we use recall@$k$, a criterion commonly used in ANNS studies, and
\begin{equation*}
    recall@k = \frac{|S(\mathbf{q},F) \cap S^{*}(\mathbf{q},F)|}{k}.
\end{equation*}

This work assumes an in-memory setting, and disk-based approaches \cite{subramanya2019diskann,chen2021spann,zhang2025highly} are not within the scope of this paper.
The main notations in this paper are summarized in \Cref{tab:notation}.
When we analyze algorithms, we assume that $d = O(1)$ because our algorithms linearly scale to $d$.

\begin{table}[!t]
    \centering
    \caption{Summary of notations}
    \label{tab:notation}
    \begin{tabular}{cl} \toprule
        Notation                & Meaning       \\ \midrule
        $x$                     & Object        \\ 
        $\mathbf{v}$            & $d$-dimensional vector of $x$                 \\
        $\mathbf{q}$            & $d$-dimensional query vector                  \\
        $\mathbf{a}$            & $m$ attribute values ($m$-dim. vector) of $x$ \\
        $X$                     & Set of $n$ objects    \\
        $dist(\cdot,\cdot)$     & Distance between two $d$-dimensional vector   \\
        $F$                     & Query filters \\
        $k$                     & Search result size    \\
        $S^{*}(\mathbf{q},F)$   & Exact $k$NN result of $\mathbf{q}$ and $F$  \\
        \bottomrule
    \end{tabular}
\end{table}

\subsection{Graph Index for ANNS}
In a graph index, each vertex corresponds to a vector, and similar vectors are connected by edges.
Edge selection is dependent on a given graph structure:
The most representative graph indices, such as HNSW \cite{malkov2018efficient}, NSG \cite{fu2019fast}, and Vamana \cite{subramanya2019diskann}, follow \textit{relative neighbor graph} or RNG structure.
Consider three vectors $\mathbf{v}_1$, $\mathbf{v}_2$, and $\mathbf{v}_3$.
If $dist(\mathbf{v}_2,\mathbf{v}_3) < dist(\mathbf{v}_1,\mathbf{v}_3)$, the RNG does not create an edge between $\mathbf{v}_1$ and $\mathbf{v}_3$.

ANNS on a graph index is done by greedy search.
Given a query vector $\mathbf{q}$, the greedy search traverses the graph index from its entry vertex.
For each unvisited neighbor vertex $\mathbf{v}$, the greedy search computes $dist(\mathbf{v},\mathbf{q})$ and pushes $\mathbf{v}$ into a heap, so that we can next move to the nearest vertex to $\mathbf{q}$.
We iterate this traversal until the heap becomes empty (i.e., no closer vectors are found).


\subsection{Baseline Algorithms}
The most straightforward algorithm for our problem is a brute-force one that runs a linear scan of $X$ (and it computes $dist(\mathbf{v}_i,\mathbf{q})$ only when $\mathbf{a}_i$ satisfies $F$).
Clearly, this algorithm incurs inefficient $O(n)$ distance computations, so existing works devised the following techniques.

\vs
\noindent
\textbf{\textsf{Pre-filter}} first obtains $X(F)$ and then runs a linear scan of $X(F)$.
Notice that $X(F)$ can be efficiently obtained with a data structure based on attribute values.
Although this approach can return the exact answer, its search time is slow.
Because $X(F)$ is query-dependent, Pre-filter cannot use any data structure for ANNS.

\vs
\noindent
\textbf{\textsf{Post-filter}} uses a pre-built ANNS index (e.g., graph index \cite{wang2021comprehensive}) to accelerate query efficiency.
Given a query vector $\mathbf{q}$ and filters $F$, it runs a ``vanilla'' ANNS algorithm on the index and checks the search result set.
If the search result set does not contain $k$ objects that satisfy $F$, it runs another ANNS by enlarging $k$.
This is repeated until $k$ objects that satisfy $F$ are returned.
Because the search strategy of Post-filter does not care about attribute values, its search space can be large and requires many search iterations, leading to suboptimal performance.

\vs
\noindent
\textbf{\textsf{ACORN}} \cite{patel2024acorn}, a state-of-the-art filter-agnostic index, is available for our problem.
Its structure is a variant of HNSW, and ACORN traverses only neighbors that pass a given query filter.
This approach may fail to reach the exact $k$NN objects if no neighbors have the query filter.
It is not guaranteed that neighbors have similar attribute values.
Therefore, ACORN loses paths to $\mathbf{q}$ when no neighbors pass $F$, resulting in low recall.
Our experimental results confirm this observation.

\section{Related Work}  \label{sec:related-work}
Since the problem of ANNS in high-dimensional space has been used in many applications, its efficient algorithms have been extensively devised.
Although some approaches exist for ANNS, it is widely known that the graph index shows the best time-recall trade-off among them \cite{li2019approximate}.
To obtain a better time-recall trade-off, many graph indices were designed so far, e.g., \cite{malkov2014approximate,harwood2016fanng,fu2016efanna,fu2019fast,fu2021high,arai2021lgtm,peng2023efficient}.
HNSW \cite{malkov2018efficient} and Vamana \cite{subramanya2019diskann} are famous graph indices.
They are usually used as an underlying data structure in vector data management systems \cite{pan2024survey} and have theoretical analysis \cite{prokhorenkova2020graph,indyk2023worst}.
Grant allows an arbitrary graph index to maintain the high-dimensional vectors in $X$.

\vs
\noindent
\textbf{ANNS with Matching Filter.}
Many works consider this setting \cite{wu2022hqann,gollapudi2023filtered,wang2023efficient,patel2024acorn,cai2024navigating,simhadri2024results}.
For example, FilteredVamana \cite{gollapudi2023filtered} extends Vamana by adding a label for each edge.
During ANNS, only edges with the label matching a given filter are traversed.
StitchedVamana is another variant of Vamana, and it is a Vamana of the objects having the same label (or attribute value).
That is, for each distinct label, a StitchedVamana is built.
NHQ \cite{wang2023efficient} defines a customized distance function that considers vector and label similarities.
A proximity graph is built based on this customized distance function.
They do not support multi-attribute cases and/or range filters.

\vs
\noindent
\textbf{ANNS with Range Filter.}
Recent studies \cite{zuo2024serf,xu2024irangegraph} considered how to support a range filter in HNSW.
SeRF \cite{zuo2024serf} requires $O(n^2)$ space, and iRangeGraph reduced this space to $O(n\log n)$.
Another work \cite{engels2024approximate} developed a different data structure based on a segment tree, and it also needs $O(n\log n)$ space.
Literature \cite{liang2025unify} further reduces the space complexity to $O(n)$.
These works focus on a single attribute case, and \cite{xu2024irangegraph} suggests that their approaches are reduced to Post-filter when query filters specify multiple attributes.
WoW \cite{wang2025wow} achieves an updatable index for ANNS with a single range filter.

HQI \cite{mohoney2023high} and SIEVE \cite{li2025sieve} can deal with multi-attribute range filters, but they require historical query filters to build the structures.
That is, without this assumption, their structures cannot be built.
(Without past filters, SIEVE is reduced to a standard algorithm, e.g., Post-filter on a graph index.)
Furthermore, HQI focuses on \textit{batch query processing}, and this setting is different from our objective, i.e., processing each ANN query efficiently while providing a high recall result.
Unlike these works, the Grant structure can be built without any past information.

\vs
\noindent
\textbf{Vector Data Management Systems.}
Milvus \cite{wang2021milvus}, NaviX \cite{sehgal2025navix}, and ADBV \cite{wei2020analyticdb} support multi-attribute range filters.
Milvus can select an ANNS index, e.g., graph index and IVF (inverted file index), whereas ADBV employs IVF.
ADBV proposes two approaches: one is Pre-filter, and the other is the verification of a subset of $X$.
Specifically, when IVF is used, only some posting lists similar to a given query vector are verified.
This verification is the same as the brute-force algorithm introduced in \Cref{sec:preliminary}.
NaviX also employs Pre-filter but runs an approximate verification.
Milvus also employs Pre-filter and considers Post-filter.

\section{Grant} \label{sec:grant}
\textbf{Main idea.}
Indexing low-dimensional data is a classic problem, and many famous tree-based indices, such as kd-tree \cite{bentley1975multidimensional} and R-tree \cite{guttman1984r}, are known.
These data structures support efficient orthogonal (or axis-parallel) range queries.
Because the attribute values held by the objects in $X$ are regarded as low-dimensional vectors, we can build a tree index based on the attribute values.
Then, for example, $X(F)$ can be efficiently obtained with a range query of $F$.
This approach avoids accessing many vectors that do not satisfy $F$.
However, simply applying this idea results in Pre-filter, so we suffer from a high computational cost when $|X(F)|$ is large.
(If $n$ is large, e.g., tens of millions, $|X(F)| = 0.01 \cdot n$ is still large.)

On the other hand, assume that $X$ is indexed by a graph index (i.e., based on the $d$-dimensional vectors).
Although this approach yields a fast search time for ANNS without filters, our approach is limited to Post-filter when filters are specified.
Post-filter may need to access many objects (vectors) that do not satisfy $F$.

Now we have a challenging question:
\textit{Given a query vector $\mathbf{q}$ and query filters $F$, how can we access only objects that satisfy $F$ and are similar to $\mathbf{q}$?}
The above observations of the tree and graph indices actually suggest an idea yielding a positive answer to this question.
\textit{If we receive benefits of the tree and graph indices (if we can run graph-based ANNS on the range query result of $F$), we overcome the above challenge.} 
We propose Grant, which achieves this idea.

\subsection{Grant Structure}
Existing indices do not have the ability to navigate objects that pass a given filter and have vectors similar to $\mathbf{q}$.
\textit{The novelty of Grant is this navigation ability.}
Grant is a tree structure in which leaf nodes have a capacity of $O(n/c)$, where $c$ is a constant and hyper-parameter.
This tree structure is constructed based on the attribute values.
Each object in $X$ is maintained by a leaf node of the Grant, and each leaf node maintains the $O(n/c)$ objects by using a graph index $\mathcal{G}$ based on their $d$-dimensional vectors.
Note that the number of leaf nodes in Grant is $O(c) = O(1)$.

\Cref{fig:grant} illustrates an example of the Grant structure, assuming that each node can have at most two child nodes.
This combination of tree and graph structures is simple yet effective: The main merits of this structure are as follows.
(i) A set of objects having similar attribute values can be maintained by a graph index, i.e., attribute-based and vector-based navigation is simultaneously achieved.
(ii) No past information is required.
These properties are not held by existing techniques for our problem and are important to positively answer the challenging question mentioned earlier and yield a fast ANNS time described in \Cref{theorem:time}.

\begin{figure*}[!t]
    \centering
    \includegraphics[width=0.99\linewidth]{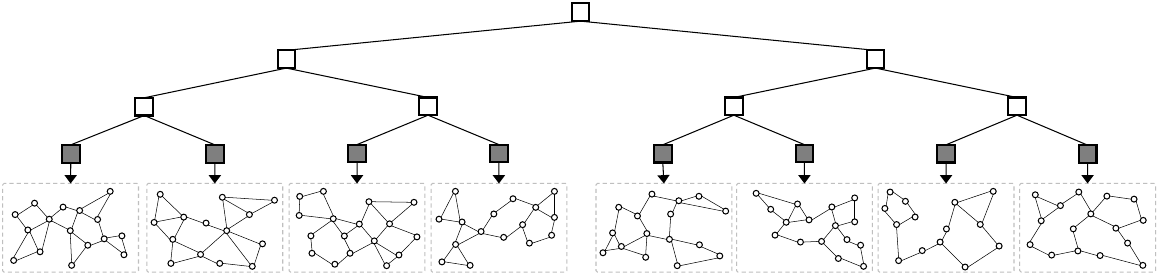}
    \caption{Grant structure. Grey nodes represent leaf nodes, each of which maintains a graph index.}
    \label{fig:grant}
\end{figure*}

\vs
\noindent
\textbf{Grant construction.}
This construction is done only once, and the Grant is general to any query vectors and range filters.
Let $\mathcal{T}$ be the base tree structure of Grant.
We use a top-down tree-building approach, and the Grant is built based on the construction algorithm of $\mathcal{T}$.
We assume that it is a balanced tree.

\Cref{algo:build} describes the constuction algorithm.
Building $\mathcal{T}$ is performed by \textsc{Make-Node}$(\cdot,c)$.
Given $X$, we make a root node of $\mathcal{T}$.
If $|X| \leq n/c$, it is a leaf node, and we build a graph index $\mathcal{G}$ based on the $d$-dimensional vectors in $X$.
($\mathcal{G}$ is an arbitrary graph index.)
Otherwise, we equally split $X$ based on the construction algorithm of $\mathcal{T}$ (i.e., based on attribute values) to make $\mathcal{T}$ balanced.
We then create its descendant nodes in a recursive manner.

\begin{algorithm}[!t]
    \caption{\textsc{Build}$(X,c)$}	\label{algo:build}
    \DontPrintSemicolon
    \KwIn {$X$, $c$}
    root node of $\mathcal{T} \gets$ \textsc{Make-Node}$(X,c)$\;
    \textbf{return} $\mathcal{T}$\;
    \SetKwProg{Fn}{Function}{:}{}
    \Fn{\textsc{Make-Node}$(Y,c)$}
    {
        $u \gets$ a new node of $\mathcal{T}$\;
        \If {$|Y| \leq n/c$}
        {
            \textsc{Build-Graph}$(Y)$\;
            \textbf{return} $u$
        }
        $Y_l$, $Y_r \gets$ \textsc{Split}$(Y)$\;    \label{algo:build:split}
        left child node of $u \gets$ \textsc{Make-Node}$(Y_l, c)$\;
        right child node of $u \gets$ \textsc{Make-Node}$(Y_r, c)$\;
        \textbf{return} $u$
    }
\end{algorithm}

\begin{proposition} \label{proposition:time}
Let $f(n)$ be the time of splitting a given set of $n$ objects.
Also, let $g(n)$ be the time of building a given proximity graph of the object set.
A Grant is built in $O(n(1 + f(n))) + g(n)$ time\footnote{As seen in \Cref{sec:experiment}, Grant is built quickly, e.g., within a few minutes on ten million objects.
Hence, as long as $X$ is not frequently updated, rebuilding Grant is practical for dynamic $X$ (e.g., batch updates).}.
\end{proposition}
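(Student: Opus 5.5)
The plan is to split the cost of \textsc{Build}$(X,c)$ into two parts: the internal (non-leaf) nodes, where \textsc{Split} is called, and the leaf nodes, where \textsc{Build-Graph} is called. We bound each part separately and add the results.

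\emph{Internal nodes.} A node becomes a leaf exactly when $|Y| \leq n/c$. Because $\mathcal{T}$ is balanced and each split is equal, a node at depth $i$ holds about $n/2^i$ objects. Hence the recursion stops at depth $\lceil \log_2 c \rceil = O(1)$, since $c$ is a constant. The tree therefore has $O(c) = O(1)$ nodes in total and $O(1)$ levels. Every internal node $u$ with set $Y$ allocates a node in $O(1)$ time and calls \textsc{Split}$(Y)$ in $f(|Y|)$ time. We may assume $f$ is non-decreasing, so $f(|Y|) \leq f(n)$. Summing over the $O(1)$ internal nodes gives $O(1 + f(n))$. We could also sum level by level: each level partitions $X$, so each level costs at most $O(n)$ for bookkeeping and passing objects to the children. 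With $O(1)$ levels this total is $O(n(1+f(n)))$, which absorbs the previous bound. This covers the $O(n(1+f(n)))$ term of the statement, which is a loose but valid upper bound.

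\emph{Leaf nodes.} The leaves partition $X$ into sets $Y_1,\dots,Y_L$ with $\sum_j |Y_j| = n$ and $L = O(c)$. Their total cost is $\sum_j g(|Y_j|)$. To obtain $g(n)$ we need a superadditivity-type assumption on $g$: for any partition, $\sum_j g(|Y_j|) \leq g(\sum_j |Y_j|)$. This holds whenever $g(n)/n$ is non-decreasing, since then $g(|Y_j|) \leq |Y_j|\, g(n)/n$, and summing over $j$ gives $\sum_j g(|Y_j|) \leq g(n)$. Standard proximity-graph construction times, for example $O(n\log n)$ for HNSW or anything at least linear, satisfy this condition. So I would state it explicitly as a mild assumption. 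Failing that, we can fall back on $L \cdot g(n) = O(g(n))$ because $L = O(1)$, which gives the same asymptotic bound under monotonicity alone.

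\emph{Main obstacle.} The only delicate step is justifying these regularity assumptions on $f$ and $g$: monotonicity for $f$, and monotonicity or superadditivity for $g$. The statement leaves them implicit. I would make them explicit, noting that they hold for all practical split routines (median finding, sorting) and graph builders. Adding the two parts then yields $O(n(1+f(n))) + g(n)$.
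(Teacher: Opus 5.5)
Your proof is correct and uses the same decomposition as the paper: you bound the cost of the $O(1)$-height, $O(c)$-node tree built by \textsc{Split}, and then charge the $O(1)$ leaf graphs, each on at most $n/c$ objects, to $g(n)$. The only difference is that you state the regularity assumptions (monotone $f$, and $g(n)/n$ nondecreasing so that $\sum_j g(|Y_j|)\le g(n)$), which the paper's proof uses silently when it concludes that the leaf graphs cost $g(n)$ in total.
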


\begin{proof}
Since $\mathcal{T}$ is a balanced tree, the height of $\mathcal{T}$ is at most $\log c = O(1)$ (recall that each leaf node can manage at most $n/c$ objects and $c = O(1)$).
In addition, the number of nodes of $\mathcal{T}$ is $\leq 2c-1=O(c) = O(1)$.
Hence, $\mathcal{T}$ can be built in $O(n(\log c + f(n))) = O(n(1 + f(n)))$ time.
The remaining cost we need to consider is that of building a graph index in each leaf node.
When we call \textsc{Build-Graph}$(Y)$ in \Cref{algo:build}, $|Y| \leq n/c$.
Because the number of leaf nodes is at most $c = O(1)$, the total cost of building the graph indices is $g(n)$.
It is now clear that this proposition holds.
\end{proof}

\begin{proposition}[Space complexity of a Grant]    \label{proposition:space}
Let $S(n)$ be the space of a given proximity graph, and a Grant needs $O(1) + S(n)$ space.
\end{proposition}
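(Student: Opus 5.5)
The plan mirrors the proof of \Cref{proposition:time} and splits the space of a Grant into two parts: the skeleton of the base tree $\mathcal{T}$, and the proximity graphs stored at its leaf nodes. I would bound each part separately and add the bounds.

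\textbf{Tree skeleton.} As in the proof of \Cref{proposition:time}, $\mathcal{T}$ is balanced and each leaf holds at most $n/c$ objects. Hence $\mathcal{T}$ has at most $2c-1 = O(1)$ nodes and height at most $\log c = O(1)$. Each internal node stores only two child pointers and the split information produced by \textsc{Split} (e.g., a split attribute and a split value). Each leaf stores a pointer to its graph index. All of these are $O(1)$ words per node, so the skeleton uses $O(c) = O(1)$ space.

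\textbf{Leaf graphs.} The leaves partition $X$: each object is assigned to exactly one leaf by the recursive split in \Cref{algo:build}. So if the leaves hold $n_1, \dots, n_{\ell}$ objects, then $\sum_i n_i = n$ and $\ell \le c = O(1)$. The space of the leaf graphs is $\sum_i S(n_i)$.

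This sum is where I expect the only real obstacle: it is bounded by $S(n)$ only under a mild regularity assumption on $S$. I would assume, as the proof of \Cref{proposition:time} implicitly does for $g$, that $S$ is superadditive, i.e., $S(a)+S(b) \le S(a+b)$. This holds for every standard proximity graph, whose space is $\Theta(n\cdot\text{degree}) + O(nd)$ and hence at least linear. Under this assumption $\sum_i S(n_i) \le S(n)$. If that assumption is unwanted, an alternative is to use $\ell = O(1)$ and $n_i \le n$ with monotonicity of $S$, which gives $\sum_i S(n_i) \le c\,S(n) = O(S(n))$. That bound is also consistent with the statement up to constants.

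Adding the two parts gives a total space of $O(1) + S(n)$ for a Grant.
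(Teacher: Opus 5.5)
Your proposal is correct and uses the same decomposition as the paper's proof: $O(c)=O(1)$ tree nodes of constant size, plus leaf graphs built over a partition of $X$. Your explicit superadditivity assumption (or the monotonicity fallback, which gives $O(S(n))$) simply makes rigorous the paper's terse fact (iii) about the space of the graphs. One small slip: being ``at least linear'' does not by itself imply superadditivity, e.g.\ $S(n)=\alpha n+\beta$ with a per-graph constant $\beta>0$ fails it; however, since there are only $O(1)$ leaves, such overheads add only $O(1)$, so your bound $O(1)+S(n)$ still holds.
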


\begin{proof}
Recall the following three facts.
(i) Each object in $X$ is maintained in a leaf node of $\mathcal{T}$.
(ii) $\mathcal{T}$ has $O(c) = O(1)$ nodes, because the Grant has $c = O(1)$ leaf nodes.
(iii) Each graph has $S(n)$ space.
Given these, this proposition is true.
\end{proof}

\subsection{ANNS on Grant}  \label{sec:grant:query}
\Cref{algo:anns} describes our approach to running ANNS with multi-attribute range filters on the Grant.
We first run a range query of $F$ on the Grant and obtain a set of leaf nodes $U$ that overlap with $F$. 
Then, for each $u \in U$, we run a graph-based ANNS algorithm and pick only objects satisfying $F$.
Note that each leaf node may \textit{partially} overlap $F$, so some objects maintained in a given leaf node may not satisfy $F$.
\Cref{fig:query} depicts an example.

One of our main results is that this approach theoretically sounds and is practically fast, as analyzed below.
Note that $\tilde{O}(\cdot)$ hides any polylog factors.

\begin{algorithm}[!t]
    \caption{\textsc{Query}$(\mathcal{T},\mathbf{q}, F, k)$}  \label{algo:anns}
    \DontPrintSemicolon
    \KwIn {$\mathcal{T}$ (Grant), $\mathbf{q}$, $F$, $k$}
    $S(\mathbf{q},F) \gets \emptyset$\;
    $U \gets$ \textsc{Range-Search}$(\mathcal{T}, F)$\;       \label{algo:anns:range-search}
    \ForEach{$u \in U$}                                     
    {                                                           \label{algo:anns:graph_b}
        $S' \gets$ \textsc{ANNS}$(\mathcal{G},\mathbf{q}, k)$\;
        \ForEach{$x \in S'$ such that $x$ satisfies $F$}{
            Update $S(\mathbf{q},F)$ if necessary             \label{algo:anns:graph_e}
        }
    }
    \textbf{return} $S$
\end{algorithm}

\begin{lemma}   \label{lemma:time-general}
Let $T(n)$ be the time of a range search on $\mathcal{T}$ of $n$ objects.
Our algorithm can run in $T(n) + \tilde{O}(1)$ time.
\end{lemma}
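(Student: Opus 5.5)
The plan is to charge the cost of \Cref{algo:anns} to its two phases, the range search in line~\ref{algo:anns:range-search} and the loop over leaves in lines~\ref{algo:anns:graph_b}--\ref{algo:anns:graph_e}, and show that the second phase contributes only $\tilde{O}(1)$.

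The first phase costs $T(n)$ by definition. Because \textsc{Range-Search} only has to report leaf nodes of $\mathcal{T}$, not individual objects, the set $U$ it returns is a subset of the leaves. From the proof of \Cref{proposition:time} we know $\mathcal{T}$ has at most $c = O(1)$ leaves, so $|U| = O(1)$.

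For the second phase, each $u \in U$ invokes \textsc{ANNS}$(\mathcal{G},\mathbf{q},k)$ on a graph index over at most $n/c$ objects. I will use the standard assumption, supported by the theoretical analyses of graph indices cited in \Cref{sec:related-work} \cite{prokhorenkova2020graph,indyk2023worst}, that greedy search on a proximity graph with $N$ vertices runs in polylogarithmic time $\tilde{O}(1)$ in $N$ for fixed $k$ and $d = O(1)$. Since $N \le n/c \le n$, each call costs $\tilde{O}(1)$. The call returns $|S'| \le k$ candidates. Checking each candidate against $F$ costs $O(m)$, and inserting it into the result set, kept as a size-$k$ heap, costs $O(\log k)$. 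With $m$ and $k$ treated as constants, or at worst absorbed into the polylog factor, the post-processing is also $\tilde{O}(1)$ per leaf.

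Summing over the $O(1)$ leaves in $U$ keeps the second phase at $\tilde{O}(1)$, so the total running time is $T(n) + \tilde{O}(1)$.

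The main obstacle is justifying the $\tilde{O}(1)$ bound for a single graph search. This does not hold for an arbitrary graph index, so the lemma has to be read as conditional on the chosen $\mathcal{G}$ supporting polylogarithmic-time ANNS. I would state that assumption explicitly and cite the known bounds for navigable, RNG-like graphs. Everything else follows from the $O(1)$ bound on the number of leaves.
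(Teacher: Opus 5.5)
Your proposal is correct and matches the paper's proof. Line~\ref{algo:anns:range-search} costs $T(n)$, $|U| = O(1)$ because there are only $O(c)=O(1)$ leaves, and each per-leaf graph search costs $\tilde{O}(1)$, so lines~\ref{algo:anns:graph_b}--\ref{algo:anns:graph_e} cost $\tilde{O}(1)$ in total. The paper gets the per-leaf bound by citing \cite{shangqi2026proximity} for the existence of a proximity graph with $\tilde{O}(1)$ worst-case ANNS time, whereas you state it as an explicit assumption on $\mathcal{G}$ and also count the filter checks and heap updates, which the paper leaves implicit.
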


\begin{proof}
Line \ref{algo:anns:range-search} requires $T(n)$ time.
We rely on the recent result \cite{shangqi2026proximity}, and there exists a proximity graph that deals with \textsc{ANNS}$(\mathcal{G},\mathbf{q}, k)$ in $\tilde{O}(1)$ worst time\footnote{In our experiments, we used HNSW as a practical instantiation of \cite{shangqi2026proximity} because it empirically follows similar efficiency trends \cite{malkov2018efficient,prokhorenkova2020graph} while providing superior scalability and implementation simplicity for real-world deployments \cite{wang2021comprehensive,azizi2025graph,li2019approximate}.}.
Since the number leaf nodes is at most $c = O(1)$, $|U| = O(1)$.
Therefore, lines \ref{algo:anns:graph_b}--\ref{algo:anns:graph_e} need $\tilde{O}(1)$ time.
\end{proof}

\vs
\noindent
Existing works that can deal with this problem have no time bound, but our algorithm can yield a sub-linear query time.

\begin{theorem} \label{theorem:time}
If we use a kd-tree as the base tree structure of a Grant, our algorithm can run in $O(n^{1-1/m})$ time.
\end{theorem}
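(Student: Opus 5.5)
The plan is to instantiate \Cref{lemma:time-general} with a kd-tree as $\mathcal{T}$. That lemma already gives a query time of $T(n) + \tilde{O}(1)$, where $T(n)$ is the cost of the range search in line \ref{algo:anns:range-search} of \Cref{algo:anns}. So it suffices to show two things: that $T(n) = O(n^{1-1/m})$ when $\mathcal{T}$ is a kd-tree over the $m$-dimensional attribute vectors $\mathbf{a}_i$, and that the $\tilde{O}(1)$ term is dominated by this. The second point holds because $m \geq 2$ gives $n^{1-1/m} \geq n^{1/2}$, which dominates any polylogarithmic factor.

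For the first point I will invoke the classical orthogonal range search bound for kd-trees (Bentley; Lee--Wong). In a balanced kd-tree on $N$ points in $m$ dimensions, an axis-parallel box query visits $O(N^{1-1/m} + t)$ nodes, where $t$ is the number of reported items.

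The argument has two parts, carried out in this order.
\begin{enumerate}
\item \emph{Boundary-crossing nodes.} Fix one of the $2m$ bounding hyperplanes of $F$. The number of nodes whose cell is crossed by it satisfies $Q(N) \leq 2^{m-1} Q(N/2^m) + O(1)$, since in every block of $m$ consecutive levels the hyperplane is parallel to exactly one splitting direction. This recurrence solves to $O(N^{1-1/m})$. Summing over the $2m = O(1)$ hyperplanes keeps the bound $O(N^{1-1/m})$.
\item \emph{Fully contained nodes.} A node whose cell lies entirely inside $F$ is reported as a whole, without further descent. In Grant the reported items are leaf nodes, not individual objects, so $t \leq |U| \leq c = O(1)$.
\end{enumerate}

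The step I expect to need the most care is reconciling this bound with the Grant structure, because the tree $\mathcal{T}$ has only $O(c)$ nodes rather than $n$. Taken literally, the traversal in line \ref{algo:anns:range-search} then costs $O(c) = O(1)$, which is trivially within $O(n^{1-1/m})$. I will state this explicitly and let the theorem follow as an upper bound.

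To make the $n^{1-1/m}$ form meaningful, I would also cover a filtering variant that some implementations use. There, objects in partially overlapping leaves are tested against $F$ by descending a kd-tree on the attribute values inside each leaf, rather than by post-filtering the graph output. For this variant I apply the bound above to each leaf with $N \leq n/c$. Summing over at most $c$ leaves gives $c \cdot O((n/c)^{1-1/m}) = O(n^{1-1/m})$. Adding the $\tilde{O}(1)$ graph-search cost from \Cref{lemma:time-general} then yields the claimed $O(n^{1-1/m})$ total.
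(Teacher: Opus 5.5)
Your proof is correct and follows the paper's own route: you instantiate \Cref{lemma:time-general} with the kd-tree range-search bound $O(n^{1-1/m}+\text{OUT})$, take $\text{OUT}=|U|\le c=O(1)$, and note that $n^{1-1/m}\ge n^{1/2}$ dominates the $\tilde{O}(1)$ graph-search term. Your extra remark that $\mathcal{T}$ has only $O(c)$ nodes, so line~\ref{algo:anns:range-search} really costs $O(1)$ and the theorem holds as a loose upper bound, is accurate; the paper's proof does not use it and simply applies the generic $n$-point bound. The closing ``filtering variant'' is not needed for \Cref{algo:anns}, and as written it drops the output-sensitive term $t$ of the kd-tree bound (a per-leaf descent that enumerates $X(F)\cap u$ can cost $\Theta(n/c)$), so you should omit it rather than rely on it.
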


\begin{proof}
A range search on a kd-tree needs $O(n^{1-1/m} + \text{OUT})$ time, where OUT is the result size of the range search \cite{bentley1975multidimensional}.
Then, $T(n) = O(n^{1-1/m} + |U|) = O(n^{1-1/m})$.
Note that $O(n^{1-1/m}) > \tilde{O}(1)$.
From \Cref{lemma:time-general}, this theorem holds.
\end{proof}

\subsection{Optimizing ANNS on Grant by Heuristic}  \label{sec:grant:heuristic}
An interesting observation of the above algorithm is that, although \Cref{theorem:time} suggests that the main cost of our algorithm is the range search theoretically, the main ``practical'' cost is derived from the graph-based ANNS of each leaf node in $U$.
It is important to notice that at most $k$ leaf nodes have the objects $\in S^{*}(\mathbf{q},F)$, but we may have $|U| > k$.
In this case, our algorithm runs graph-based ANNS on at least $|U| - k$ unnecessary leaf nodes.
However, it is hard to know which leaf node has the objects $\in S^{*}(\mathbf{q},F)$ in advance.
We therefore propose heuristic approaches that skip leaf nodes in $U$ that would not have the objects $\in S^{*}(\mathbf{q},F)$.
Note that these heuristics are applied only when $|U| > O(k)$.

\vs
\noindent
\underline{\textbf{Probabilistic node verification.}}
The idea of our heuristic is to make use of an overlapping ratio between $F$ and each leaf node in $U$.
In \Cref{fig:query}, although $u_8$ overlaps $F$, the overlapping space is very slight against the partition of $u_8$ and $F$.
Hence, it is intuitive that $u_8$ would not have the $k$NN objects, suggesting that $u_8$ is skippable.
This idea is specific to the Grant structure.
To implement this idea, we define the overlap ratio between $F$ and a leaf node $u$.

\begin{figure}[!t]
    \centering
    \includegraphics[width=0.40\linewidth]{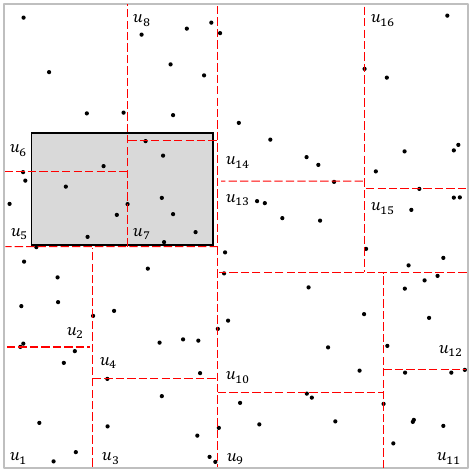}
    \caption{Black points represent two-dimensional attribute values of the objects in $X$, and each partition is maintained by a leaf node $u_i$. 
    The grey rectangle shows $F$, and it overlaps with $u_5$, $u_6$, $u_7$, and $u_8$, so we run graph-based ANNS on these leaf nodes.}
    \label{fig:query}
\end{figure}

\begin{definition}
\textit{Let $V(F)$ be the volume of the hyper-rectangle defined by the ranges in $F$.
Also, let $V(u)$ be the volume of the hyper-rectangle of a node $u$.
The overlap ratio between $F$ and $u$, $r(F, u)$, is}
\begin{equation*}
    r(F, u) = \max \left( \frac{V(F \cap u)}{V(F)}, \frac{V(F \cap u)}{V(u)}\right),
\end{equation*}
\textit{where $F \cap u$ is the overlapping space between $F$ and $u$.}
\end{definition}

Assuming a local uniform distribution of attributes within a leaf node, the number of valid candidates $E[n_{valid}]$ is estimated as
$$E[n_{valid} \,|\, u, F] \propto \frac{n}{c} \cdot r(F, u).$$
Nodes with higher $E[n_{valid}]$ can be prioritized to increase the likelihood of retrieving true nearest neighbors.

For $u \in U$, we have $r(F, u) \in (0, 1]$, and $r(F, u)$ is obtained in $O(1)$ time.
Because it is not guaranteed that $k$ leaf nodes with the largest overlap ratio have the $k$NN objects, we employ a probabilistic approach.
Specifically, for each leaf node $u$ in the range search result, if $r(F, u)$ is larger than a random real value $\in (0,1]$, we run graph-based ANNS in this leaf node.

\subsection{Workload-aware ANNS}    \label{sec:grant:workload}
If past query vectors are available, we can enhance our proposed approach.
Let $\mathbf{p}$ be the nearest vector to a given query vector $\mathbf{q}$ among a set of past query vectors\footnote{This is a different assumption from that of \cite{mohoney2023high,li2025sieve}, since they assume that not past query vectors but past query \textit{filters} are available.
Our assumption is easier to hold: for example, we can use the vectors in $X$ as past query vectors.}.
Furthermore, let $dist(\mathbf{p}, u_i)$ be the distance between $\mathbf{p}$ and the nearest neighbor vector in the leaf node $u_i$ of the Grant.
For ease of explanation, assume that $dist(\mathbf{p}, u_1) < dist(\mathbf{p}, u_2) < \cdots dist(\mathbf{p}, u_{c})$.
Assuming that $k = 2$ and $F$ overlaps with 10 leaf nodes, $u_1$, $u_{2}$, ..., and $u_{10}$, it is intuitively known that running ANNS of $\mathbf{q}$ in $u_1$ and $u_2$ is sufficient to obtain the objects $\in S^{*}(\mathbf{q},F)$.
This is because, if $dist(\mathbf{p}, \mathbf{q})$ is small, $\mathbf{p}$ and $\mathbf{q}$ share similar ANNS results.
This idea also skips ANNS on unnecessary nodes and enables accessing only $O(k)$ leaf nodes while keeping high recall.  

\vs
\noindent
\textbf{Additional pre-processing.}
Let $\mathbf{P}$ be a set of past query vectors, and it is maintained based on a graph index $\mathcal{G}_{\mathbf{P}}$.
Additionally, for each past query vector $\mathbf{p} \in \mathbf{P}$, we run ANNS on each leaf node $u$ of the Grant.
Let $x$ be the object obtained for the ANNS, and we record $dist(\mathbf{p},\mathbf{v})$.
We use a matrix $\mathcal{M}$ to maintain such a record:
\begin{equation*}
    \mathcal{M} =
\begin{pmatrix}
    dist(\mathbf{p}_1, \mathbf{v}_{\alpha})      & \cdots    & dist(\mathbf{p}_1, \mathbf{v}_{\beta})         \\
    \vdots                              & \ddots    & \vdots                                \\
    dist(\mathbf{p}_{c}, \mathbf{v}_{\gamma})    & \cdots    & dist(\mathbf{p}_{c}, \mathbf{v}_{\phi})
\end{pmatrix}
\end{equation*}
where the $i$-th row $j$-th column is the ANN result of $\mathbf{p}_{i}$ in the leaf node $u_j$.
This is done after running building Grant.
Note that this additional pre-processing is also done only once.

When an HNSW is used as the index of $\mathbf{P}$ such that $|\mathbf{P}| \leq n$, the additional pre-processing time is $\tilde{O}(|\mathbf{P}|n)$.
This is because building an HNSW of $\mathbf{P}$ needs $O(|\mathbf{P}|\log|\mathbf{P}|)$ time \cite{malkov2018efficient} and obtaining $\mathcal{M}$ requires $O(|\mathbf{P}|c) = O(|\mathbf{P}|)$ time.
Similarly, its space complexity is $O(|\mathbf{P}|n)$.
(To reduce the time and space complexities, we can randomly sample past queries.)

\begin{algorithm}[!t]
    \caption{\textsc{Query}$(\mathcal{T},\mathcal{G}_{\mathbf{P}},\mathcal{M},\mathbf{q}, F, k)$}  \label{algo:anns-w}
    \DontPrintSemicolon
    \KwIn {$\mathcal{T}$ (Grant), $\mathcal{G}_{\mathbf{P}}$, $\mathcal{M}$, $\mathbf{q}$, $F$, $k$}
    $S(\mathbf{q},F) \gets \emptyset$\;
    $U, U' \gets$ \textsc{Range-Search}$(\mathcal{T}, F)$\;
    \If {$|U'| > O(k)$}
    {
        $\mathbf{p}_i \gets$ \textsc{ANNS}$(\mathcal{G}_{\mathbf{P}},\mathbf{q}, 1)$\;
        $U \gets  O(k)$ nodes in $U'$ such that $dist(\mathbf{p}_i,\cdot)$ (in $\mathcal{M}[i]$) is the smallest
    }
    Run lines \ref{algo:anns:graph_b}--\ref{algo:anns:graph_e} of \Cref{algo:anns}\;
    return $S$
\end{algorithm}

\vs
\noindent
\textbf{Query processing.}
\Cref{algo:anns-w} presents our algorithm for ANNS with multi-attribute range filters on Grant with $\mathcal{G}_{P}$ and $\mathcal{M}$.
Its main difference from \Cref{algo:anns} is to exploit a past query vector that is similar to a given query vector $\mathbf{q}$.
We obtain a set $U'$ of leaf nodes that overlap $F$.
If $|U'|$ is large, we compute $\mathbf{p}_i$, which is an ANN vector to $\mathbf{q}$ in $\mathbf{P}$, by running graph-based ANN on $\mathcal{G}_{\mathbf{P}}$.
Then, we pick only $O(k)$ leaf nodes in $U'$ such that $dist(\mathbf{p}_i,\cdot)$ is the smallest.
The subsequent operations are the same as our original query processing algorithm.
This is an optional workload-aware optimization that can be applied on top of Grant when past query vectors are available.

\section{Experiment}    \label{sec:experiment}
All experiments were conducted on a Ubuntu 22.04 LTS machine with a 2.2GHz Intel Core i9-13950HX processor and 128GB RAM.

\subsection{Setting}
\underline{\textbf{Datasets.}}
We used four datasets commonly used in ANNS works: Deep \cite{babenko2016efficient}, SIFT (\url{http://corpus-texmex.irisa.fr/}), SPACEV \cite{chen2021spann}, and T2 (\url{https://big-ann-benchmarks.com/neurips21.html}). 
They are large-scale and contain 10 million vectors.
Deep is a set of 96-dimensional deep features, and SIFT is a set of 128-dimensional SIFT features.
SPACEV is a set of 100-dimensional learned document vectors, and T2I is a set of 200-dimensional vectors learned from the Yandex visual search engine results.
We followed the existing filtered ANNS works \cite{mohoney2023high,wei2020analyticdb,wang2021milvus,li2025sieve} to provide each object with attribute values, and the attribute values follow a power-law distribution \cite{gupta2023caps}, which reflects real-world scenarios \cite{petersen2016power}.
By default, $m = 2$.

\vs
\noindent
\underline{\textbf{Queries.}}
For each dataset, we used 100 random query vectors that can be obtained from the same source of the dataset.
Given a query vector $\mathbf{q}$, to generate query range filters $F$, we first assigned a random value, which also follows the same power-law distribution of the attribute domain, to $l_j$ for each $j \in [1,m]$.
Then, $u_j = l_j +$ $\lambda \times$ domain size, where $\lambda \in [0.2, 0.8]$.
By default, $\lambda = 0.2$ and $k = 10$.
In this default setting, the average selectivity (i.e., the ratio of the objects satisfying the range filter) was 0.15.

\vs
\noindent
\underline{\textbf{Evaluated algorithms.}}
For algorithms that do not require past query workloads, we evaluated the following ones:
\begin{itemize}
    \setlength{\leftskip}{-4.0mm}
    \item   \textsf{BF}: A brute-force exact algorithm.
    \item   \textsf{Pre-filter}: This algorithm employs a kd-tree based on the attribute values of the objects in $X$.
            Given a query, it first traverses the kd-tree based on the query range filters and obtains a set of objects that satisfy the range filters.
            Then, it runs a linear scan of this set to return the $k$NN vectors, so this algorithm also returns the exact answer.
            This algorithm actually corresponds Milvus \cite{wang2021milvus}.
            Navix \cite{sehgal2025navix} also follows this approach but returns an approximate result.
            We tested Navix, but its performance was worse than the exact version of Pre-filter, so we omit its result.
    \item   \textsf{IVF}: An ANNS algorithm that is based on an inverted file structure.
            This approach is widely employed in the state-of-the-art ANNS systems \cite{wang2021milvus,wei2020analyticdb,mohoney2023high}.
    \item   \textsf{Post-filter}: The baseline algorithm introduced in \Cref{sec:preliminary}.
            We used HNSW \cite{malkov2018efficient} as its ANNS index.
            This algorithm actually corresponds to SIEVE \cite{li2025sieve} without past query filters (see \Cref{sec:related-work}) and \textsc{VBase} \cite{zhang2023vbase}.
    \item   \textsf{ACORN}: The baseline algorithm introduced in \Cref{sec:preliminary}.
            Because ACORN-$\gamma$ \cite{patel2024acorn} could not be built in a reasonable time (i.e., 1 hour), we used ACORN-1.
    \item   \textsf{Grant-basic}: Our algorithm described in \Cref{sec:grant:query}.
            We used a kd-tree as the base tree structure, and each leaf node maintains the corresponding objects by HNSW.
            This setting was applied to the next algorithms.
            We set $c = 150$ (see our supplementary).
    \item   \textsf{Grant-pnv}: Our algorithm with the probabilistic node verification heuristic.
\end{itemize}
For algorithms that require past query workloads, we evaluated
\begin{itemize}
    \setlength{\leftskip}{-4.0mm}
    \item   \textsf{WA-Grant}: Our algorithm that exploits past query vectors, see \Cref{sec:grant:workload}.
            For each dataset, we randomly sampled 1,000 vectors and added Gaussian noise to them.
            We used them as past query vectors, and they were also indexed by an HNSW.
    \item   \textsf{SIEVE} \cite{li2025sieve}: the state-of-the-art algorithm for ANNS with range filter on multiple attributes.
            To maximize performance, we used the same query filters as past ones (although it is impractical).
            Because this algorithm requires a memory budget, we set it so that its memory usage is almost the same as \textsf{WA-Grant} for fair comparison.
\end{itemize}
All algorithms were implemented in C++ and compiled by g++ 11.3.0 with -O3 optimization.
They enabled SIMD instructions.

For the algorithms using HNSW, we used the original implementation\footnote{\url{https://github.com/nmslib/hnswlib}}.
We tuned the inner parameters to obtain a Pareto-optimal QPS-recall tradeoff.
For HNSW, we did a grid search from $M=\{8, 16, 32, 64, 128\}$ and $ef_{construction} =\{32, 64, 100, 200, 300\}$.
As a result, we set $M = 64$ and $ef_{construction} = 200$ for the baselines, whereas we set $M = 16$ and $ef_{construction} = 64$ for Grant.
(Since each leaf node has at most $n/c < n$ vectors, each HNSW in Grant has smaller parameter values.)
For IVF, we used 1,000 clusters.

We did not consider works that assume matching filters on multiple attributes \cite{gupta2023caps,wang2023efficient}, as they are not trivial to efficiently address our problem.
A straightforward extension of matching filter techniques to range filters requires discretizing each continuous range into a very large number of (in principle, infinitely many) equality predicates, which is impractical.

\subsection{Result}
We first report the performances of the algorithms that do not require past query workloads.
For indexing time, memory usage, impact of $c$, impact of $k$, impact of $m$, and impact of range size, refer to our supplementary. 

\vs
\noindent
\underline{\textbf{Impact of $c$.}}
We tuned $c$ for Grant, and \Cref{fig:c} shows QPS (queries per second) under recall $\approx 0.9$ on SIFT and SPACEV.
Note that (i) we have similar results on Deep and T2I and (ii) the ``$< 76$'' case shows the best result when $c \leq 75$.

Although Grant is not very sensitive to $c$, it has an optimal $c$.
When $c$ is large, we have many leaf nodes overlapping $F$, so it incurs many ANNSs.
On the other hand, when $c$ is small, we have fewer leaf nodes overlapping $F$ but have a larger number of objects that do not satisfy $F$.
These observations yield the results in \Cref{fig:c}.

\vs
\noindent
\underline{\textbf{Indexing time.}}
\Cref{tab:offline} shows the indexing time of each algorithm.
(Since \textsf{BF} does not employ any data structure, it is omitted.)
This phase used 24 threads.
\textsf{Grant-basic} and \textsf{Grant-pnv} share the same data structure, so their indexing time is represented by \textsf{Grant}.
Since \textsf{Pre-filter} employs a single tree structure based only on attributes, it needs a shorter indexing time than the others.

Although \textsf{Grant} employs tree and graph structures, its indexing time is at most just 1.5 minutes for 10 million high-dimensional vectors.
An interesting observation is that \textsf{Grant} needs shorter indexing time than \textsf{IVF}, \textsf{Post-filter}, and \textsf{ACORN}.

\begin{figure}[!t]
    \begin{center}
        \subfigure[SIFT]{%
            \includegraphics[width=0.35\linewidth]{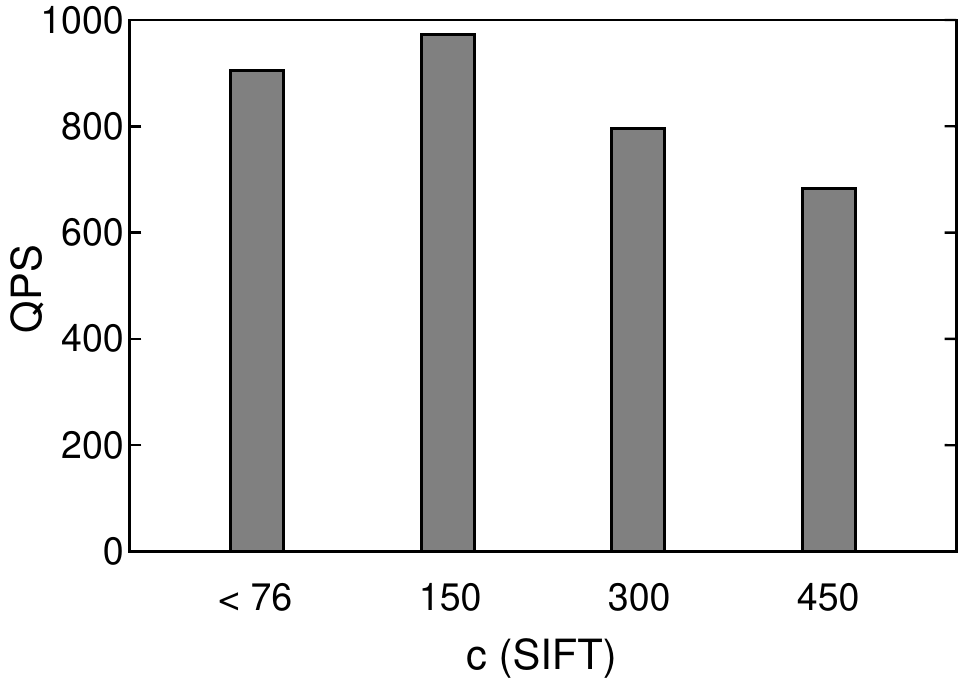}    \label{fig:sift_c}}
        \subfigure[SPACEV]{%
            \includegraphics[width=0.35\linewidth]{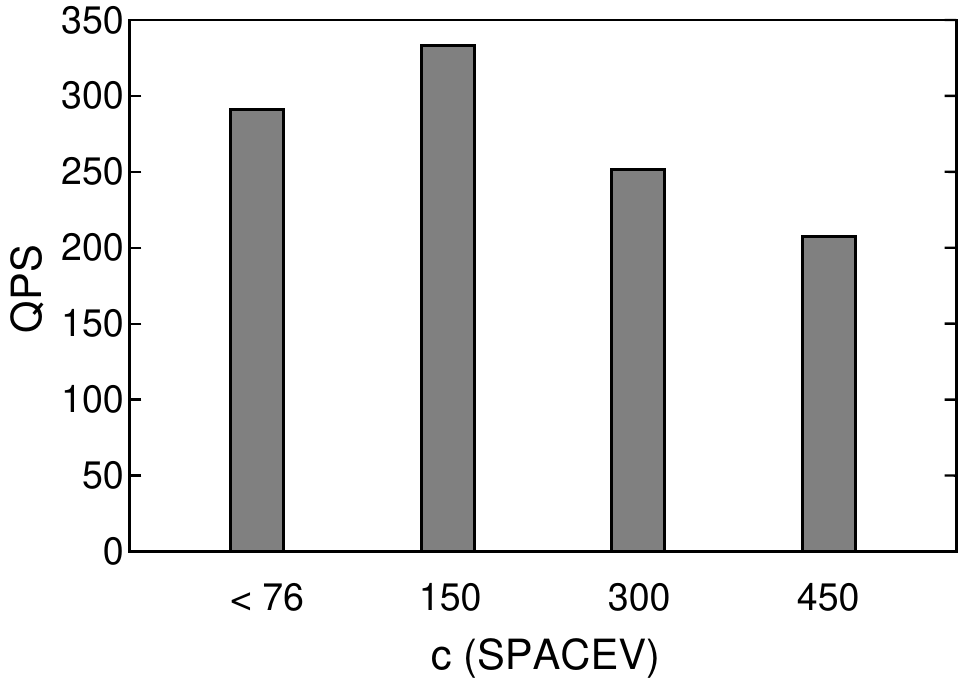}  \label{fig:spacev_c}}
        \caption{Impact of $c$ under recall $\approx 0.9$}
        \label{fig:c}
    \end{center}
\end{figure}

\begin{table}[!t]
    \centering
    \caption{Indexing time [sec]}
    \label{tab:offline}
    \begin{tabular}{lrrrr} \toprule
        Dataset                 & Deep      & SIFT      & SPACEV    & T2I       \\ \midrule
        \textsf{Pre-filter}     & 3.04      & 3.04      & 3.03      & 3.06      \\ 
        \textsf{IVF}            & 98.75     & 151.92    & 101.37    & 202.75    \\
        \textsf{Post-filter}    & 888.26    & 1109.36   & 1380.33   & 1040.26   \\
        \textsf{ACORN}          & 1666.40   & 2067.56   & 2044.77   & 2013.43   \\
        \textsf{Grant}          & 60.20     & 71.52     & 78.34     & 97.14     \\
        \bottomrule
    \end{tabular}
\end{table}
\begin{table}[!t]
    \centering
    \caption{Memory consumption [GB]}
    \label{tab:memory}
    \begin{tabular}{lrrrr} \toprule
        Dataset                 & Deep  & SIFT  & SPACEV    & T2I   \\ \midrule
        \textsf{Pre-filter}     & 0.89  & 0.89  & 0.89      & 0.89  \\ 
        \textsf{IVF}            & 0.13  & 0.12  & 0.13      & 0.12  \\
        \textsf{Post-filter}    & 9.82  & 11.01 & 9.97      & 13.70 \\
        \textsf{ACORN}          & 9.82  & 11.01 & 9.96      & 13.69 \\
        \textsf{Grant}          & 6.66  & 7.85  & 6.81      & 10.53 \\
        \bottomrule
    \end{tabular}
\end{table}

\vs
\noindent
\underline{\textbf{Memory consumption.}}
\Cref{tab:memory} exhibits the memory consumption of the data structure of each algorithm.
Because of the graph-based indices, \textsf{Post-filter} and our algorithm needs more memory than \textsf{Pre-filter} and \textsf{IVF}.
Nonetheless, our algorithm does not consume much memory, and only less than about 10GB is required for 10 million high-dimensional vectors, which is easy to fit into modern servers.
(\textsf{ACORN} employs a graph index that is essentially HNSW, so its memory consumption is almost the same as \textsf{Post-filter}.)

\vs
\noindent
\underline{\textbf{QPS vs. recall.}}
We compare our algorithms with the baseline ones w.r.t. time and accuracy tradeoff.
\Cref{fig:tradeoff} plots the QPS vs. recall curve of each algorithm.
Note that QPS and recall were the averages of the results of the 100 queries.
We used a single thread for each query.

Our algorithms generally have better time-recall tradeoffs than the existing state-of-the-art ones on the four datasets.
The exact algorithms (i.e., \textsf{BF} and \textsf{Pre-filter}) are slow.
\textsf{IVF}, \textsf{Post-filter}, and \textsf{ACORN} improve the efficiency compared with the exact algorithms.
However, their search spaces are still large and suffer from many unnecessary object accesses.
Thanks to the Grant structure, our algorithms alleviate this issue, yielding superiority against the existing techniques.
They obtain high-recall results (e.g., 0.9) within a few milliseconds, which is also seen in \Cref{tab:decomposed} and is usually required in real-world systems \cite{zhang2023vbase}.

\begin{figure*}[!t]
    \begin{center}
        \subfigure[Deep]{%
            \includegraphics[width=0.23\linewidth]{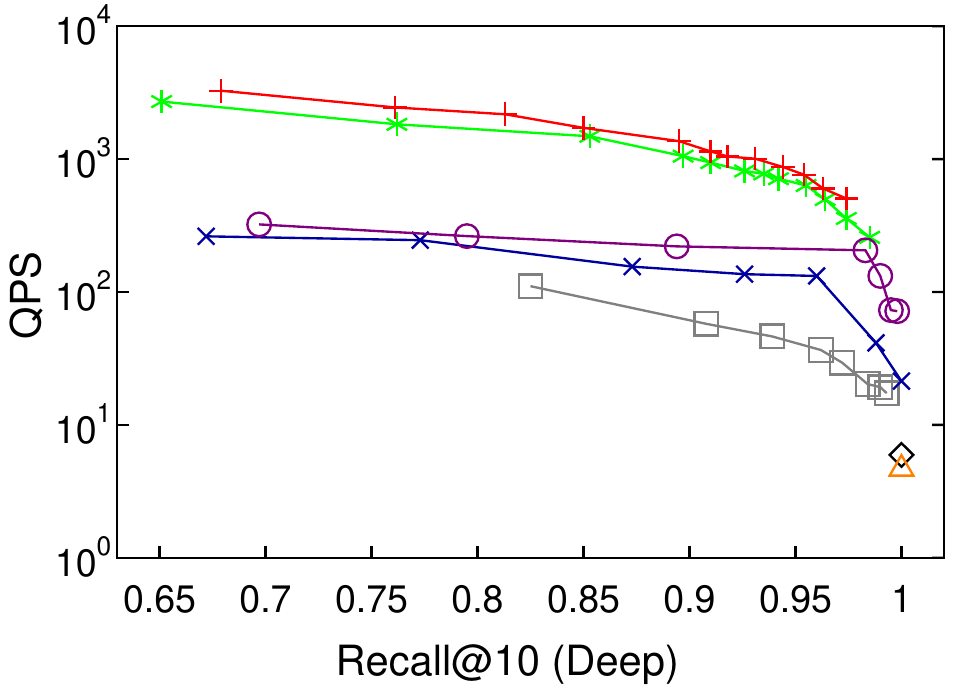}    \label{fig:deep}}
        \subfigure[SIFT]{%
            \includegraphics[width=0.23\linewidth]{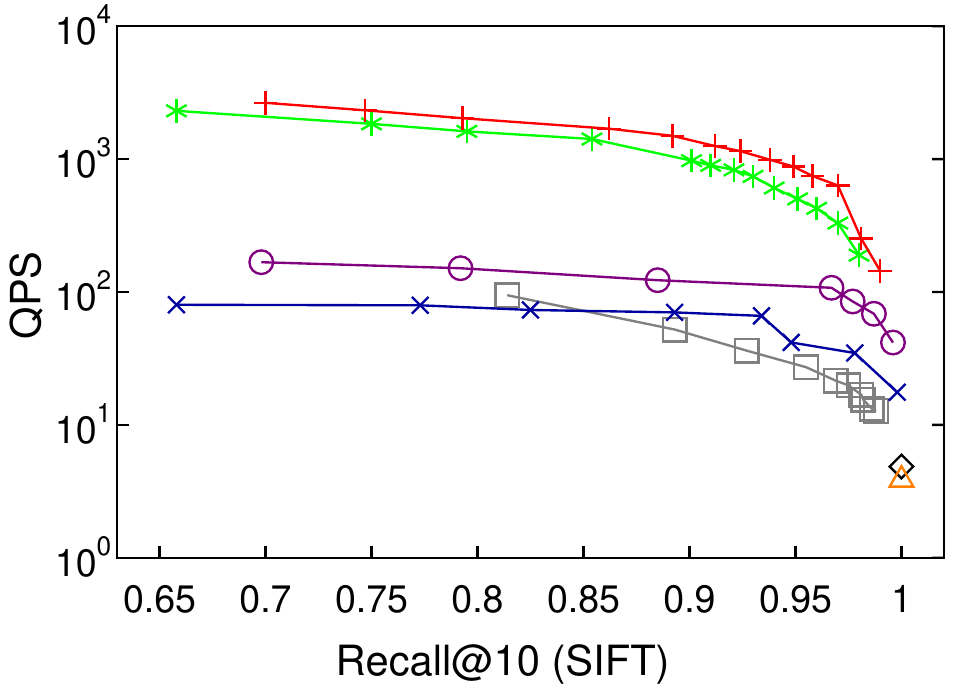}    \label{fig:sift}}
        \subfigure[SPACEV]{%
            \includegraphics[width=0.23\linewidth]{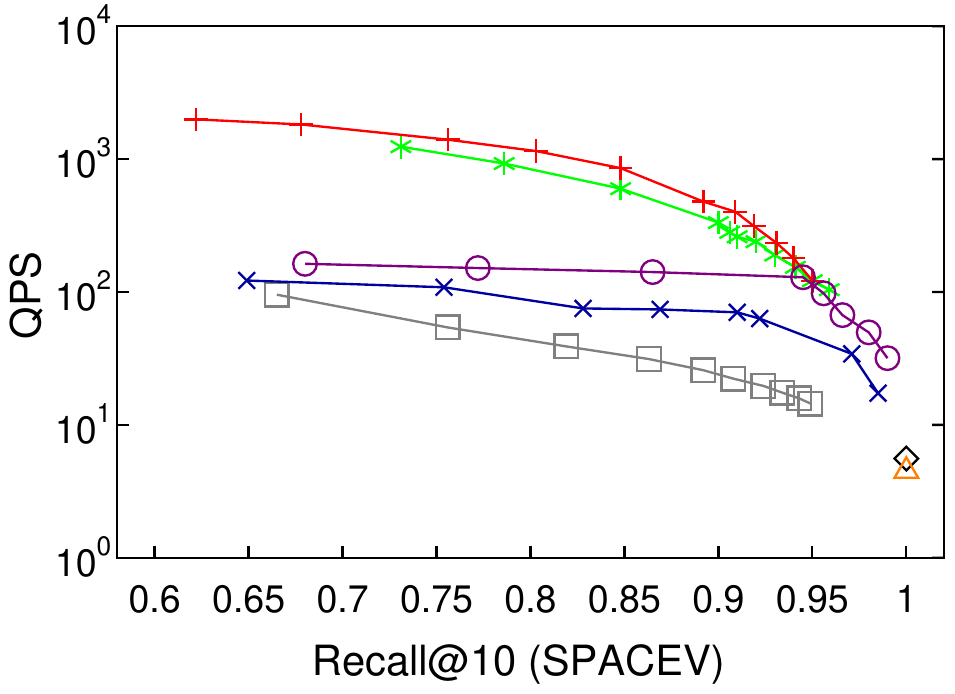}  \label{fig:spacev}}
        \subfigure[T2I]{%
            \includegraphics[width=0.23\linewidth]{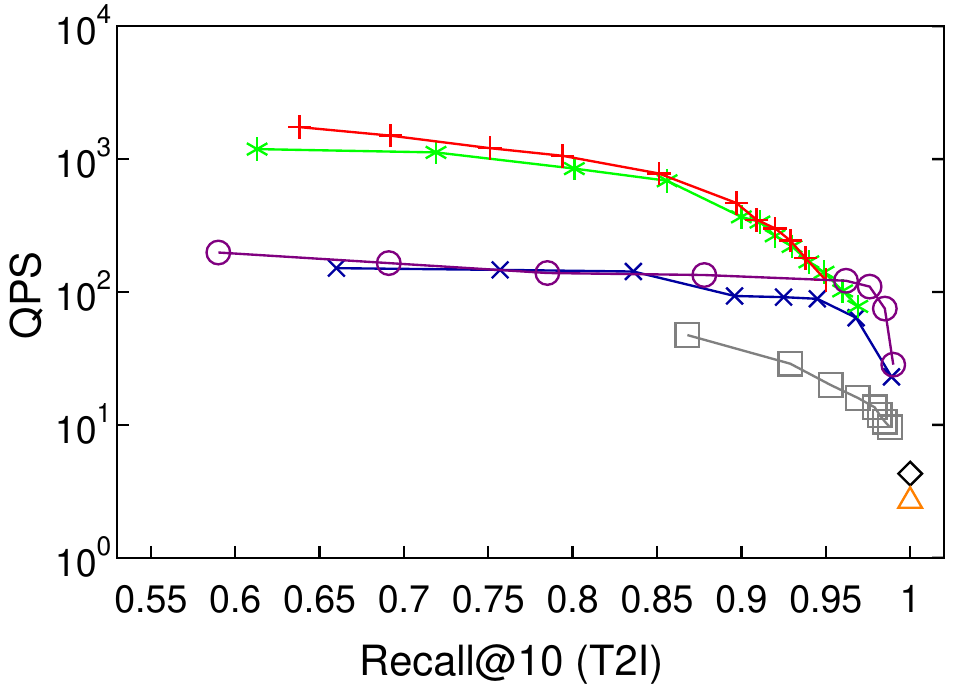}     \label{fig:t2i}}
        \caption{QPS vs. recall@10.
            ``$\diamond$'' shows \textsf{BF}, ``\textcolor{orange}{$\triangle$}'' shows \textcolor{orange}{\textsf{Pre-filter}}, ``\textcolor{gray}{$\square$}'' shows \textcolor{gray}{\textsf{IVF}}, ``\textcolor{blue}{$\times$}'' shows \textcolor{blue}{\textsf{Post-filter}}, ``\textcolor{violet}{$\circ$}'' shows \textcolor{violet}{\textsf{ACORN}}, ``\textcolor{green}{$\ast$}'' shows \textcolor{green}{\textsf{Grant-basic}}, and ``\textcolor{red}{$+$}'' shows \textcolor{red}{\textsf{Grant-pnv}}.}
        \label{fig:tradeoff}
    \end{center}
\end{figure*}

\vs
\noindent
\underline{\textbf{Decomposed time}.}
To justify the motivation in \Cref{sec:grant:heuristic}, we show the decomposed time of our algorithms in \Cref{tab:decomposed}.
Note that \textsf{RS} represents \textsc{Range-Search} in \Cref{algo:anns}.
\textsf{Grant-basic} and \textsf{Grant-pnv} share the same \textsf{RS}.
\Cref{tab:decomposed} confirms that the practical bottleneck of our algorithms is not \textsf{RS} but graph-based ANNS.
\Cref{fig:tradeoff} and \Cref{tab:decomposed} describe that \textsf{Grant-pnv} relieves this bottleneck and is better than \textsf{Grant-basic}.
\textsf{Grant-pnv} yields about 1.3--1.5x efficiency improvement against \textsf{Grant-basic}.
This result successfully demonstrates that the heuristic provides efficiency acceleration without recall drop.

\begin{figure*}[!t]
    \begin{center}
        \subfigure[Deep]{%
            \includegraphics[width=0.23\linewidth]{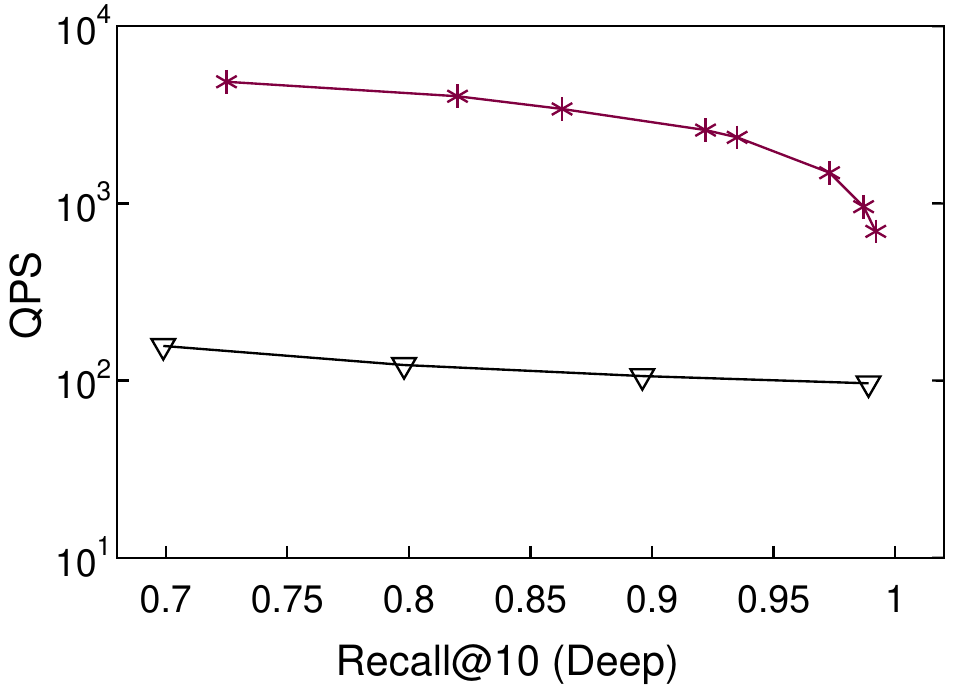}    \label{fig:deep_q}}
        \subfigure[SIFT]{%
            \includegraphics[width=0.23\linewidth]{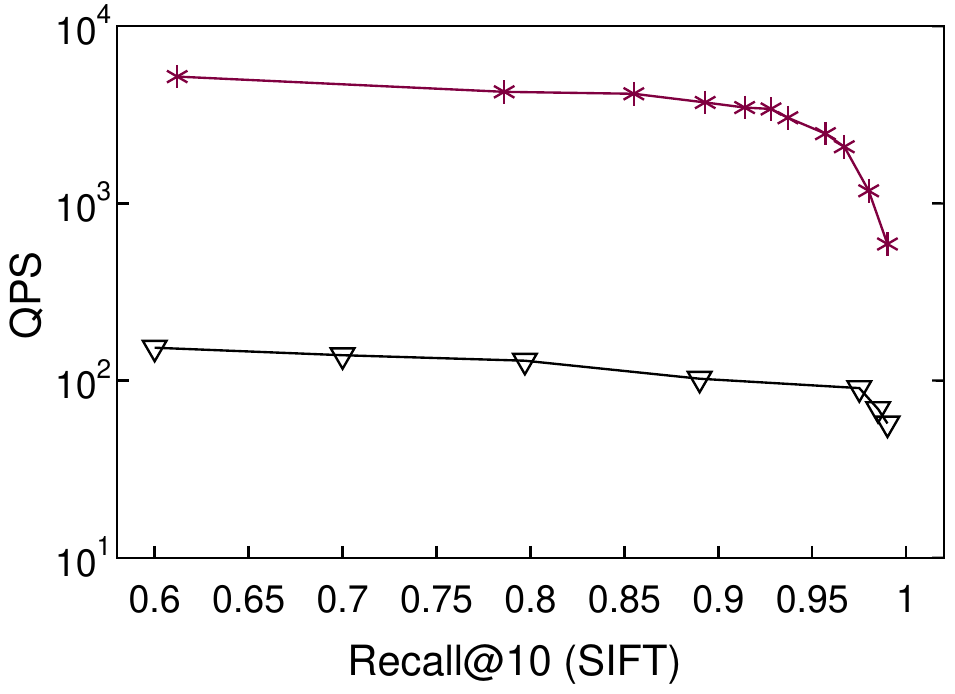}    \label{fig:sift_q}}
        \subfigure[SPACEV]{%
            \includegraphics[width=0.23\linewidth]{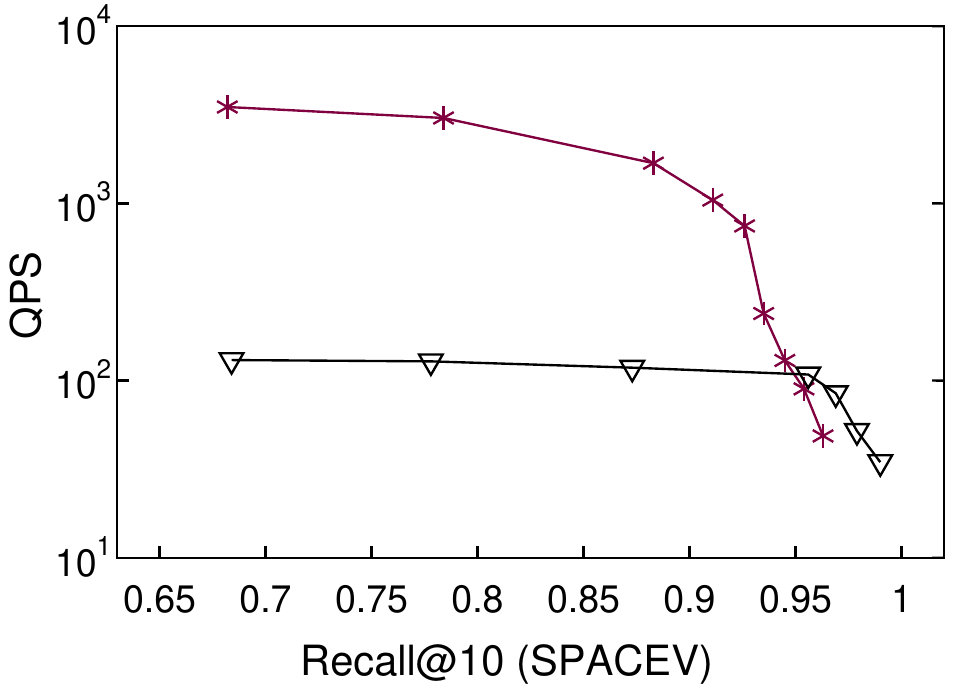}  \label{fig:spacev_q}}
        \subfigure[T2I]{%
            \includegraphics[width=0.23\linewidth]{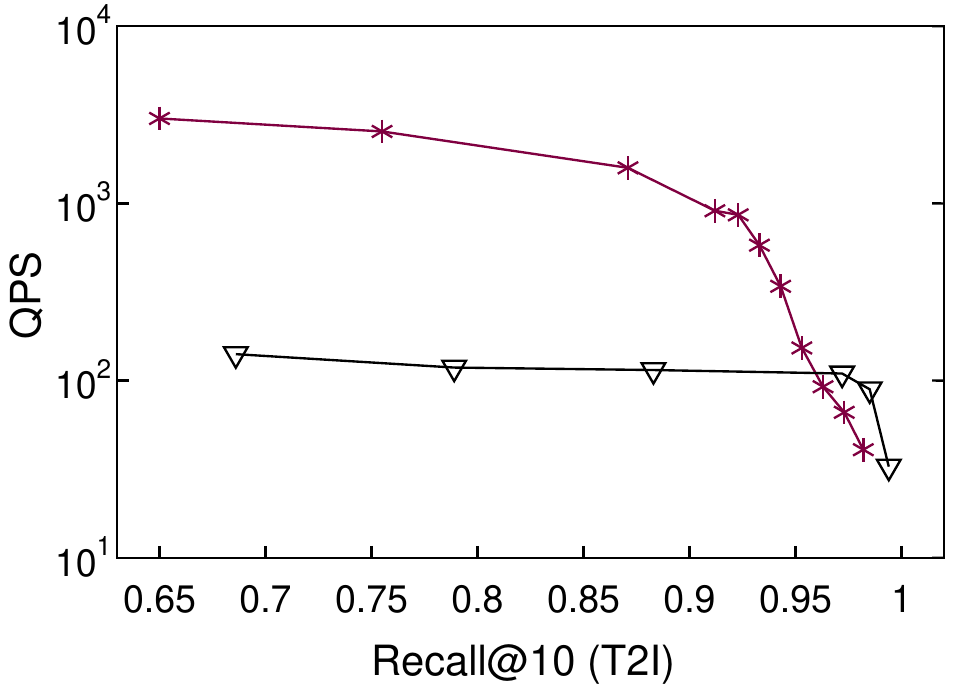}     \label{fig:t2i_q}}
        \caption{QPS vs. recall@10.
            ``$\triangledown$'' shows \textsf{SIEVE} and ``\textcolor{purple}{$*$}'' shows \textcolor{purple}{\textsf{WA-Grant}}.}
        \label{fig:tradeoff_q}
    \end{center}
\end{figure*}
\begin{table}[!t]
    \centering
    \caption{Decomposed time [microsec] (recall $\approx 0.9$). \textsf{RS} represents \textsc{Range-Search} in \Cref{algo:anns}, whereas \textsf{Grant-basic} and \textsf{Grant-pnv} show the running time in lines \ref{algo:anns:graph_b}--\ref{algo:anns:graph_e} of \Cref{algo:anns}.}
    \label{tab:decomposed}
    \begin{tabular}{lrrrr} \toprule
        Dataset                 & Deep      & SIFT      & SPACEV    & T2I       \\ \midrule
        \textsf{RS}             & 0.60      & 0.50      & 0.87      & 0.88      \\ 
        \textsf{Grant-basic}    & 949.06    & 1041.10   & 3839.24   & 2718.39   \\
        \textsf{Grant-pnv}      & 700.15    & 667.52    & 2486.83   & 2137.70   \\
        \bottomrule
    \end{tabular}
\end{table}

\vs
\noindent
\underline{\textbf{QPS vs. recall of WA-Grant and SIEVE.}}
\Cref{fig:tradeoff_q} compares \textsf{WA-Grant} and \textsf{SIEVE} under the same memory budget, and we see that \textsf{WA-Grant} normally shows a better speed-accuracy tradeoff.
For example, at recall@10 = 0.9, a reasonable accuracy, \textsf{WA-Grant} is substantially better than \textsf{SIEVE}.
This result demonstrates that our idea of limiting nodes for graph traversal based on past query \textit{vectors} is effective.
Note that the performance of \textsf{WA-Grant} is better than that of \textsf{Grant-pnv} (more than 2 times faster at recall@10 = 0.9), confirming that Grant can be accelerated by using past query vectors.
(This observation can be seen by comparing the results in \Cref{fig:tradeoff,fig:tradeoff_q}.)

\section{Conclusion}    \label{sec:conclusion}
This paper addressed the problem of ANNS with multi-attribute range filters and proposed Grant, a new data structure that solves this problem efficiently.
Our theoretical analysis proves the efficiency of Grant.
In addition, we proposed two heuristics that optimized the query processing performance of Grant.
Our experiments demonstrate that Grant outperforms the existing state-of-the-art techniques.

\vs
\noindent
\textbf{Limitation.}
This work considers a static dataset, and how to efficiently deal with dynamic data (i.e., vector insertions and deletions) is an open problem.
In addition, this work assumes that $m$ is small because real-world datasets have this tendency (e.g., price, rating, and date), and \textit{it is highly unlikely that ranges are specified for many attributes}.
If this assumption does not hold, the Grant structure may not be efficient because the tree structure is efficient only when $m$ (or dimensionality) is small (although, due to the curse of dimensionality, it is common to degrade the performance of a given algorithm for large $m$).

\vs
\noindent
\textbf{Acknowledgment.}
We thank Dr. Yusuke Matsui for his useful comments, which improved the presentation of this paper.

\appendix
\section{Additional Experiments}
\underline{\textbf{Impact of $k$.}}
\Cref{fig:k} illustrates the result under recall $\approx 0.9$.
We omit \textsf{Grant-basic} because \textsf{Grant-pnv} is always better.
The QPS of \textsf{Grant-pnv} decreases as $k$ increases.
Nevertheless, it is much better than the existing algorithms.
\textsf{BF}, \textsf{Pre-filter}, and \textsf{IVF} are not affected by $k$ because they access all objects in $X$, all objects in a result set of the range search, and all objects in the corresponding posting lists, respectively.
As $k$ becomes larger, the QPS of \textsf{Post-filter} becomes very low (e.g., less than 40 for $k = 50$) because \textsf{Post-filter} needs to iterate more ANNS.
\textsf{ACORN} failed to obtain nearly 0.9 recall (so it does not have plots) when $k \geq 20$, because its graph traversal approach loses the path to reach the exact $k$NN objects.

\vs
\noindent
\underline{\textbf{Impact of $m$.}}
We next study the impact of the number of attributes $m$.
\Cref{fig:m} shows that our algorithm keeps outperforming the other ones.
Recall the time complexity of our algorithm, and its empirical performance follows the theoretical result.
We see that \textsf{Post-filter} and \textsf{Pre-filter} have the opposite tendency.
\textsf{Post-filter} is very sensitive to $m$, whereas \textsf{Pre-filter} improves its efficiency with a larger $m$.
This is because a larger $m$ usually yields a smaller range search result set.
\textsf{ACORN} again failed to obtain nearly 0.9 recall when $m \geq 3$.
These results demonstrate that \textsf{ACORN} is not appropriate for our problem.

\vs
\noindent
\underline{\textbf{Impact of range size.}}
To investigate the impact of range size (i.e., $|X(F)|$), we ran experiments by varying $\lambda$.
\Cref{fig:range} exhibits the robustness of our algorithm against range size.
This robustness is attributed to their unnecessary leaf node skipping heuristics.
A larger range size provides a larger $X(F)$, so the performance of \textsf{Pre-filter} (\textsf{Post-filter}) becomes lower (higher) as $\lambda$ becomes larger.
The performance of \textsf{Post-filter} did not reach that of our algorithm even when $\lambda$ is large.

\begin{figure*}[!t]
    \begin{center}
        \subfigure[Deep]{%
            \includegraphics[width=0.23\linewidth]{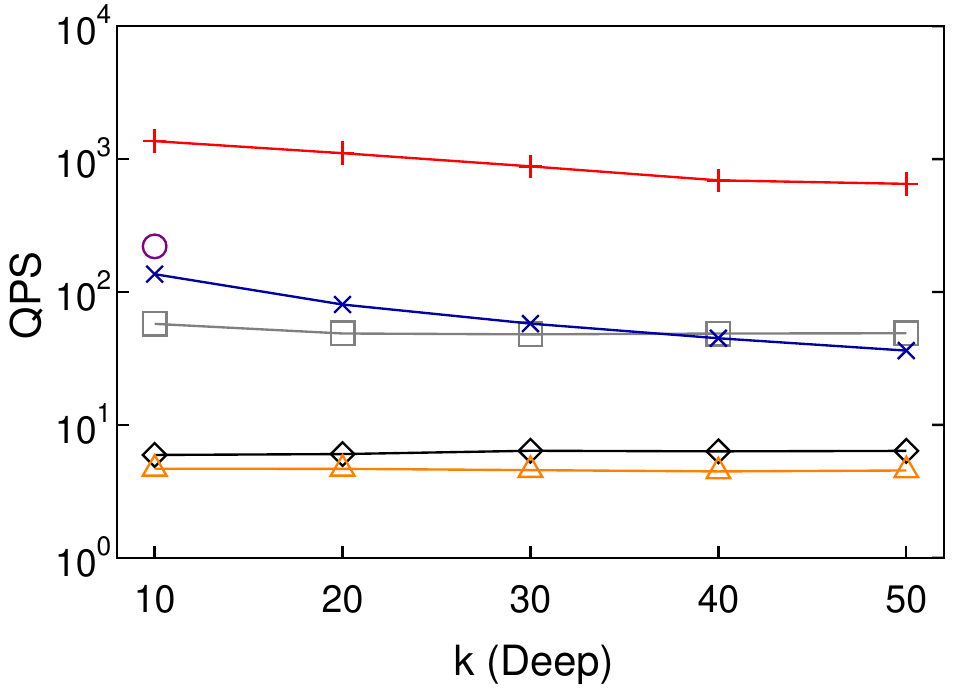}    \label{fig:deep_k}}
        \subfigure[SIFT]{%
            \includegraphics[width=0.23\linewidth]{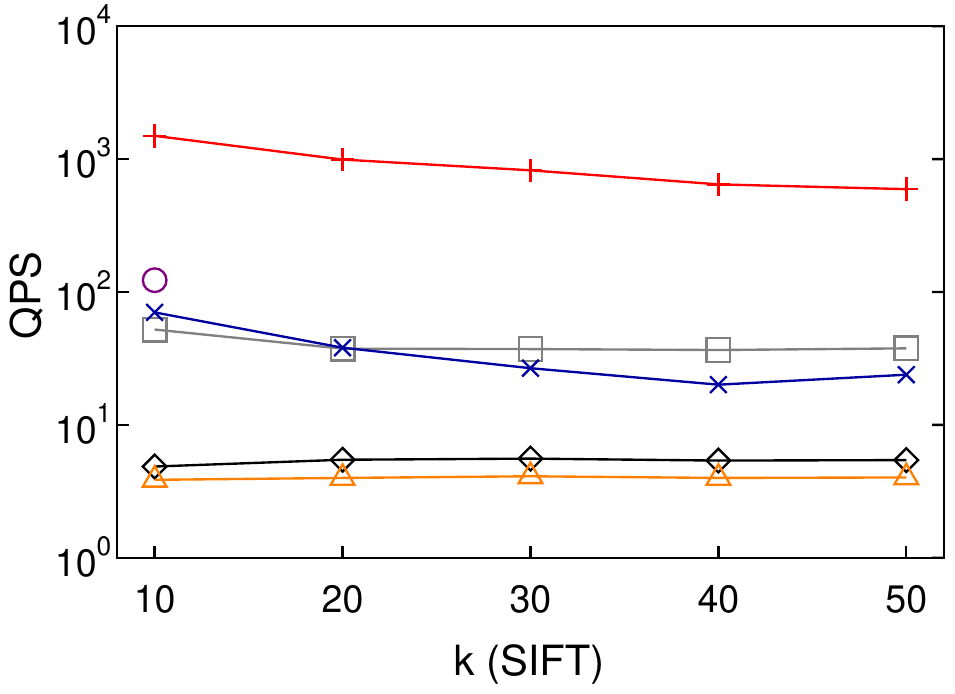}    \label{fig:sift_k}}
        \subfigure[SPACEV]{%
            \includegraphics[width=0.23\linewidth]{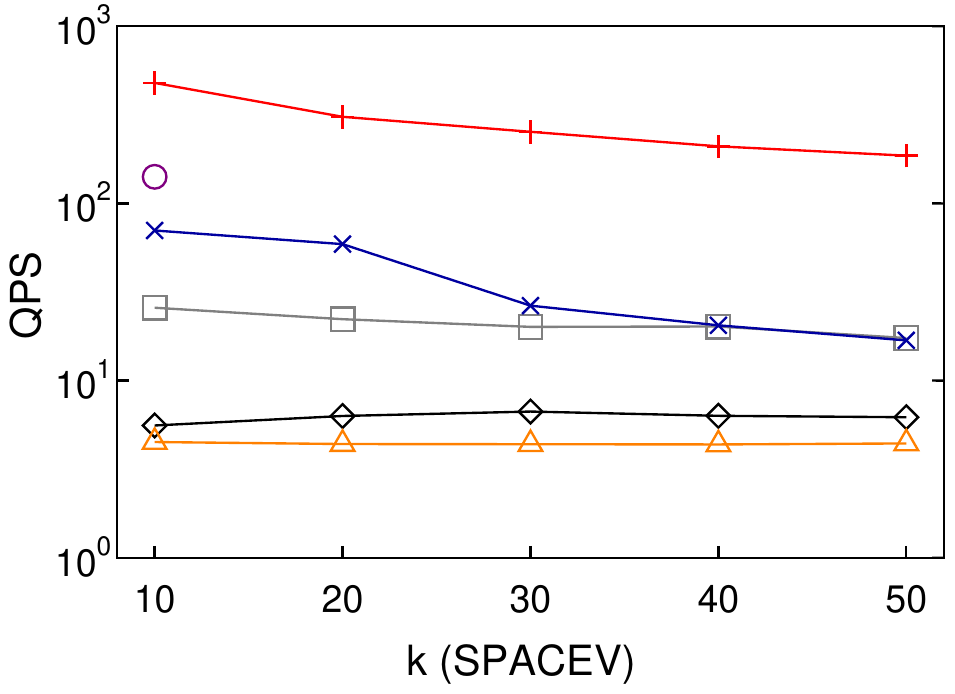}  \label{fig:spacev_k}}
        \subfigure[T2I]{%
            \includegraphics[width=0.23\linewidth]{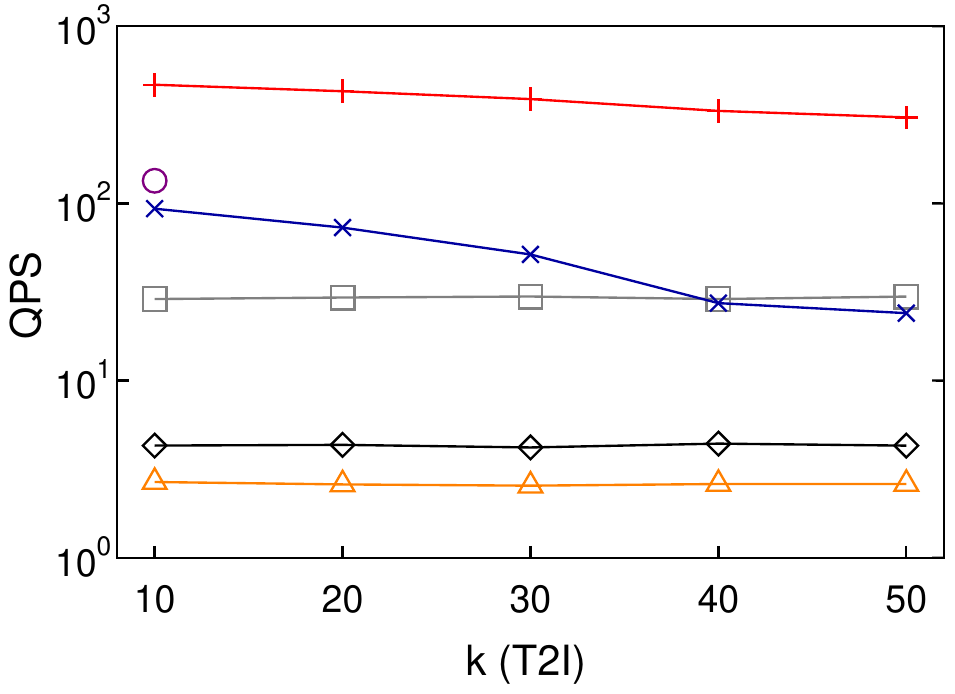}     \label{fig:t2i_k}}
        \caption{Impact of $k$ under recall $\approx 0.9$.
            ``$\diamond$'' shows \textsf{BF}, ``\textcolor{orange}{$\triangle$}'' shows \textcolor{orange}{\textsf{Pre-filter}}, ``\textcolor{gray}{$\square$}'' shows \textcolor{gray}{\textsf{IVF}}, ``\textcolor{blue}{$\times$}'' shows \textcolor{blue}{\textsf{Post-filter}}, ``\textcolor{violet}{$\circ$}'' shows \textcolor{violet}{\textsf{ACORN}}, and ``\textcolor{red}{$+$}'' shows \textcolor{red}{\textsf{Grant-pnv}}.}
        \label{fig:k}
    \end{center}
\end{figure*}
\begin{figure*}[!t]
    \begin{center}
        \subfigure[Deep]{%
            \includegraphics[width=0.23\linewidth]{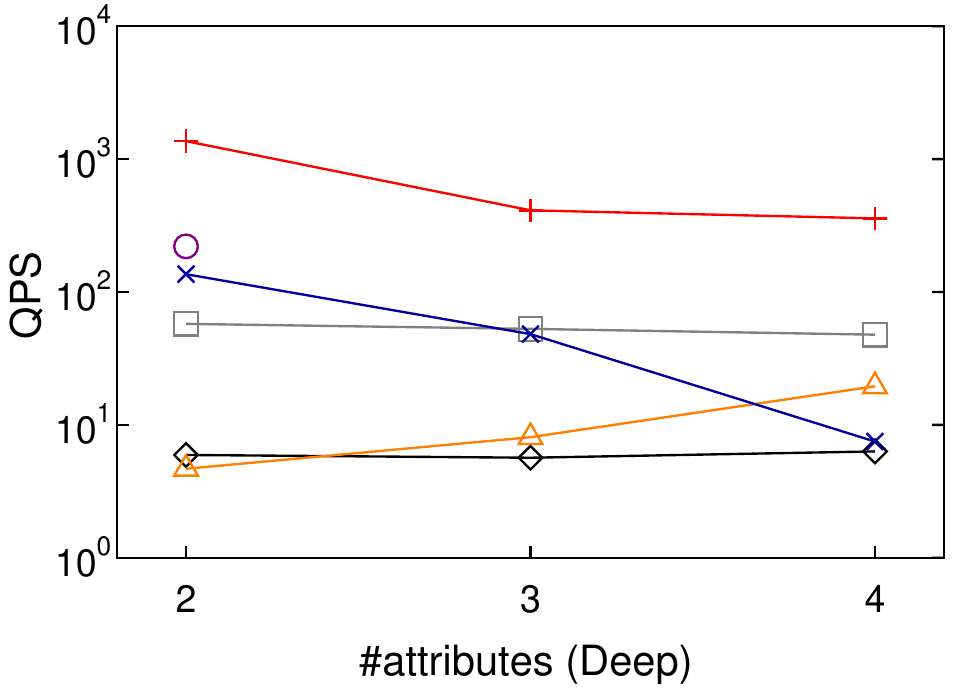}    \label{fig:deep_m}}
        \subfigure[SIFT]{%
            \includegraphics[width=0.23\linewidth]{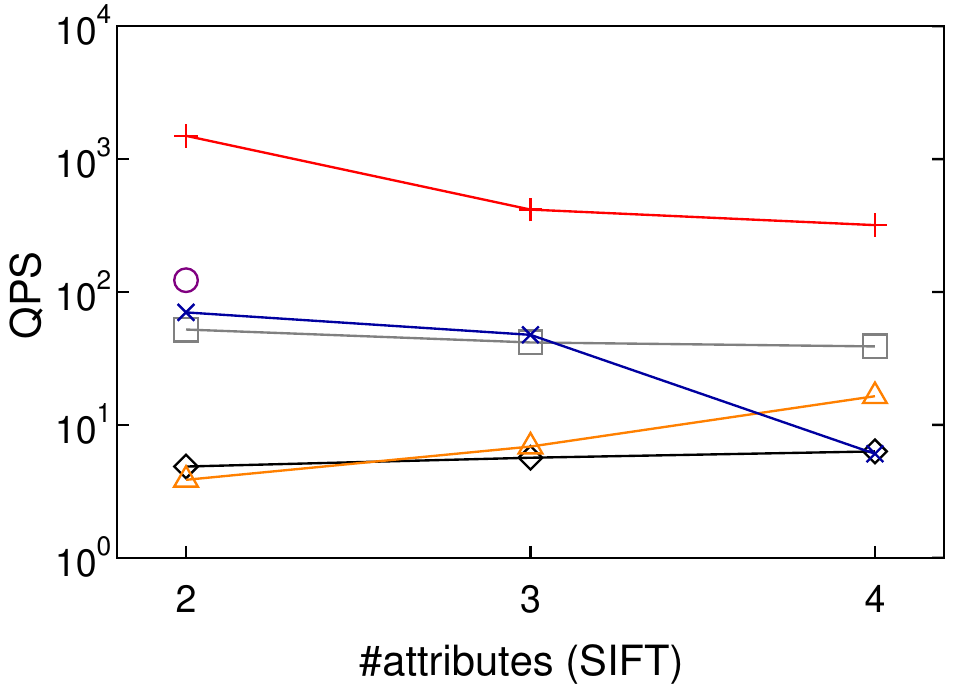}    \label{fig:sift_m}}
        \subfigure[SPACEV]{%
            \includegraphics[width=0.23\linewidth]{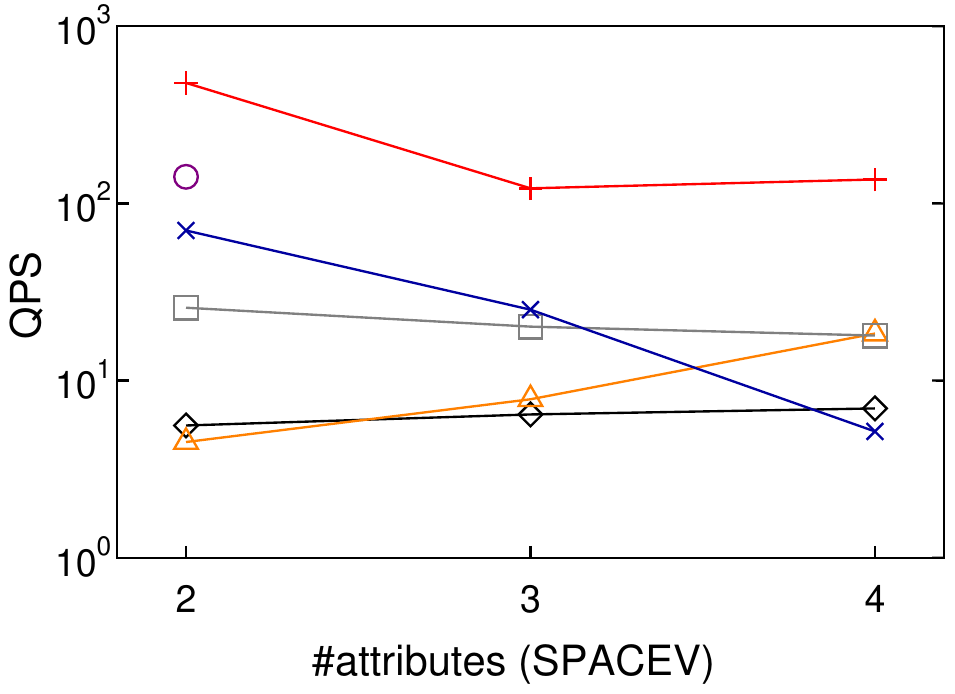}  \label{fig:spacev_m}}
        \subfigure[T2I]{%
            \includegraphics[width=0.23\linewidth]{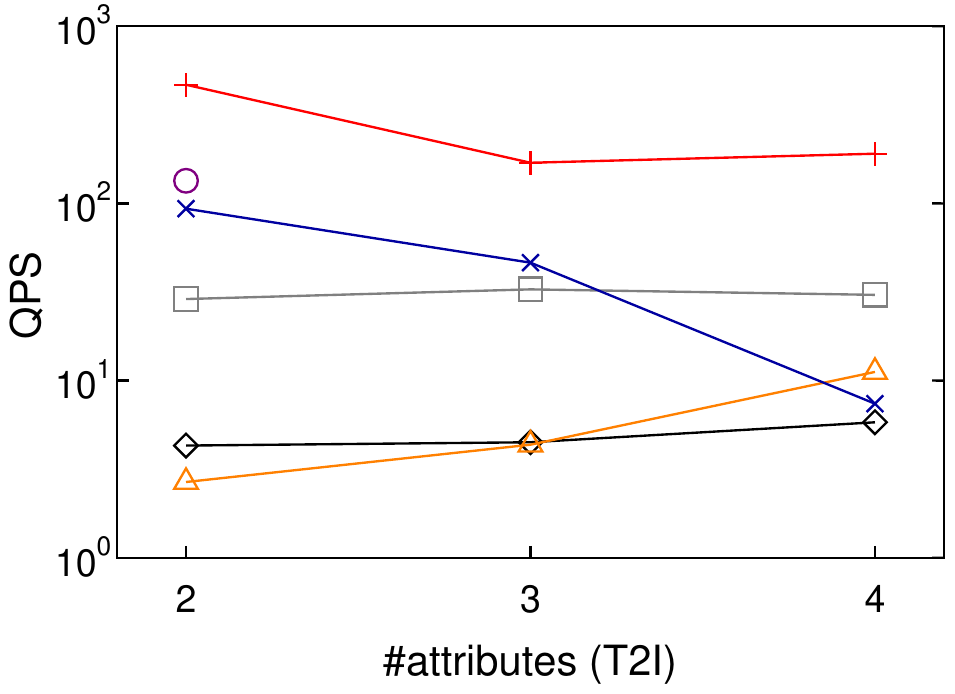}     \label{fig:t2i_m}}
        \caption{Impact of the number of attributes under recall $\approx 0.9$.
            ``$\diamond$'' shows \textsf{BF}, ``\textcolor{orange}{$\triangle$}'' shows \textcolor{orange}{\textsf{Pre-filter}}, ``\textcolor{gray}{$\square$}'' shows \textcolor{gray}{\textsf{IVF}}, ``\textcolor{blue}{$\times$}'' shows \textcolor{blue}{\textsf{Post-filter}}, ``\textcolor{violet}{$\circ$}'' shows \textcolor{violet}{\textsf{ACORN}}, and ``\textcolor{red}{$+$}'' shows \textcolor{red}{\textsf{Grant-pnv}}.}
        \label{fig:m}
    \end{center}
\end{figure*}
\begin{figure*}[!t]
    \begin{center}
        \subfigure[Deep]{%
            \includegraphics[width=0.23\linewidth]{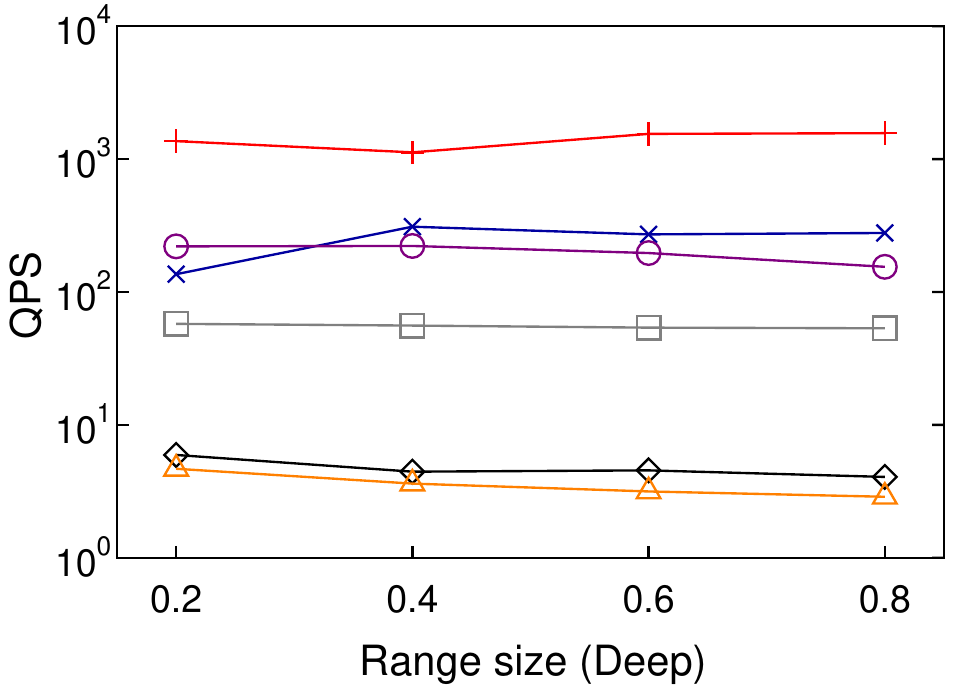}    \label{fig:deep_range}}
        \subfigure[SIFT]{%
            \includegraphics[width=0.23\linewidth]{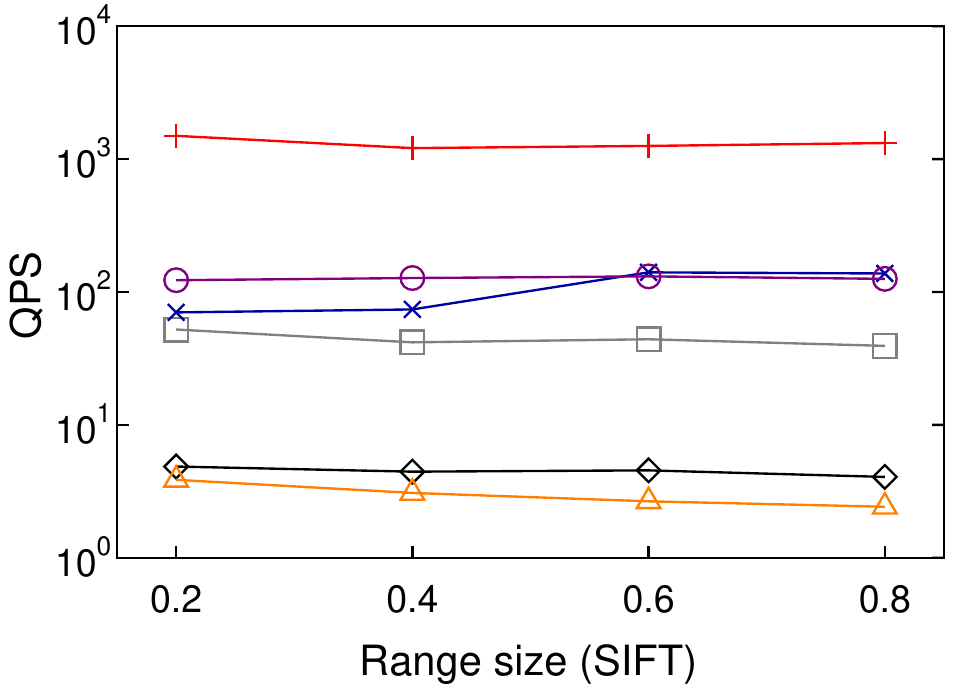}    \label{fig:sift_range}}
        \subfigure[SPACEV]{%
            \includegraphics[width=0.23\linewidth]{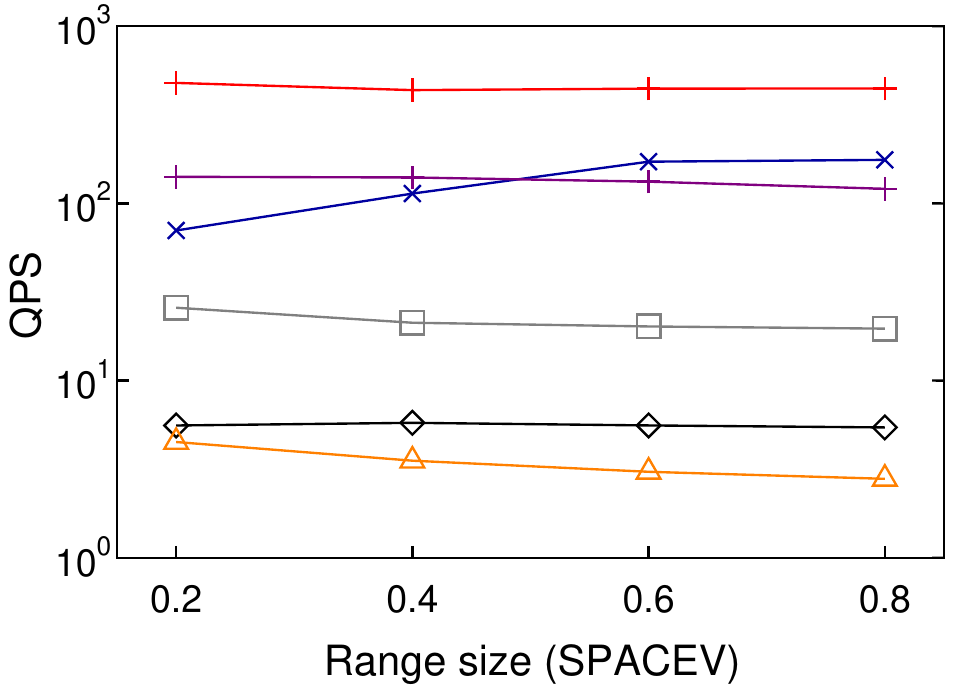}  \label{fig:spacev_range}}
        \subfigure[T2I]{%
            \includegraphics[width=0.23\linewidth]{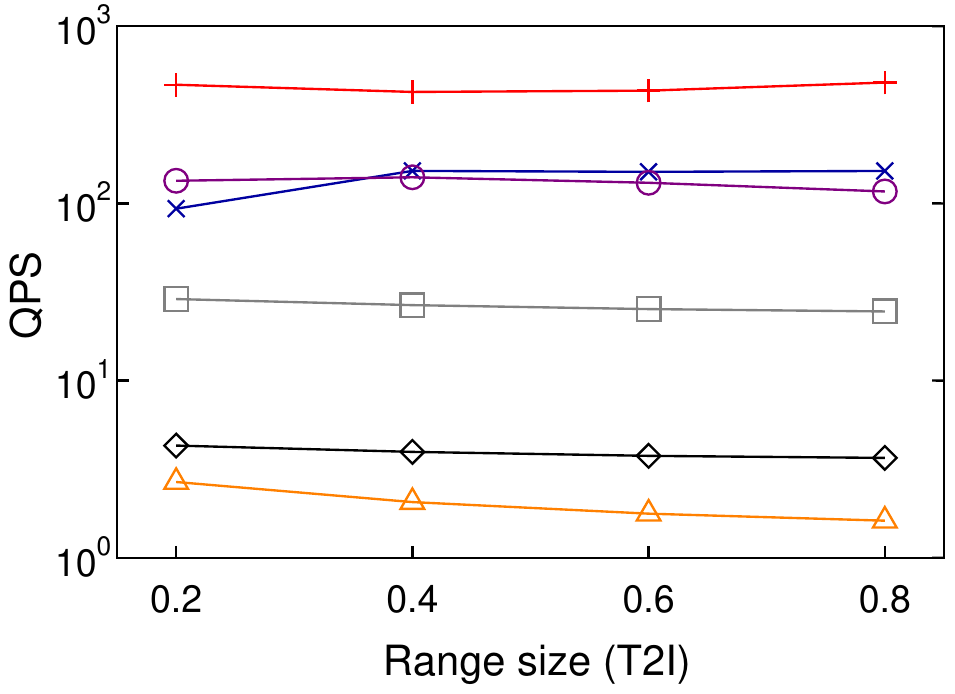}     \label{fig:t2i_range}}
        \caption{Impact of range size under recall $\approx 0.9$.
            ``$\diamond$'' shows \textsf{BF}, ``\textcolor{orange}{$\triangle$}'' shows \textcolor{orange}{\textsf{Pre-filter}}, ``\textcolor{gray}{$\square$}'' shows \textcolor{gray}{\textsf{IVF}}, ``\textcolor{blue}{$\times$}'' shows \textcolor{blue}{\textsf{Post-filter}}, ``\textcolor{violet}{$\circ$}'' shows \textcolor{violet}{\textsf{ACORN}}, and ``\textcolor{red}{$+$}'' shows \textcolor{red}{\textsf{Grant-pnv}}.}
        \label{fig:range}
    \end{center}
\end{figure*}

\bibliography{acmart}
\bibliographystyle{abbrvnat}

\end{document}